\documentclass[11pt,letterpaper]{article}

\usepackage[T1]{fontenc}
\usepackage[utf8]{inputenc}
\usepackage{lmodern,microtype}
\usepackage[margin=1in,headheight=14pt]{geometry}
\usepackage{amsmath,amssymb,amsthm,mathtools,mathrsfs}
\usepackage{enumitem}
\usepackage[numbers,sort&compress]{natbib}
\usepackage{xcolor,fancyhdr}
\usepackage{hyperref}
\definecolor{linkblue}{HTML}{1B4058}
\hypersetup{colorlinks=true,linkcolor=linkblue,citecolor=linkblue,urlcolor=linkblue}

\allowdisplaybreaks[1]
\numberwithin{equation}{section}
\newtheorem{theorem}{Theorem}[section]
\newtheorem{proposition}[theorem]{Proposition}

\newtheorem{corollary}[theorem]{Corollary}
\theoremstyle{definition}
\newtheorem{definition}[theorem]{Definition}

\theoremstyle{remark}
\newtheorem{remark}[theorem]{Remark}
\DeclareMathOperator{\Tr}{Tr}

\newcommand{\F}{\mathbb F}
\newcommand{\Z}{\mathbb Z}
\newcommand{\V}{V^n}
\newcommand{\C}{\mathbb C}
\newcommand{\HH}{\mathcal H}
\newcommand{\Hn}{\HH^{\otimes n}}
\newcommand{\LL}{\mathcal L}
\newcommand{\conv}{\boxtimes}
\newcommand{\Rz}{Rz}
\newcommand{\ket}[1]{|#1\rangle}
\newcommand{\bra}[1]{\langle#1|}
\newcommand{\proj}[1]{|#1\rangle\langle#1|}

\newcommand{\CA}{\mathcal C}
\newcommand{\cF}{\mathscr F}
\newcommand{\cI}{\mathcal I}
\newcommand{\Ad}{\mathscr A}

\setlist[enumerate]{label=\textup{(\roman*)},leftmargin=2em,itemsep=3pt,topsep=4pt}

\title{\bfseries Submodularity of entropy under quantum convolution}

\author{Milad M. Goodarzi\\[10pt]
\small Centre for Quantum Technologies, National University of Singapore, Singapore 117543, Singapore\\[2pt]
\small \texttt{milad.moazami@gmail.com}
}
\date{}

\begin{document}
\maketitle
\thispagestyle{plain}
\begin{abstract}
We develop a submodular framework for the von Neumann entropy of discrete quantum convolutions, providing a noncommutative counterpart to the direct side of entropic additive combinatorics. We first introduce globally weighted quantum convolutions, which form compatible families indexed by admissible subsets of a fixed collection of inputs. Our main theorem reveals a polymatroidal geometry underlying their entropy growth: relative to any fixed admissible input block, the entropy gains admit a normalized, monotone, submodular extension to all subsets of the remaining inputs. The theorem yields convolutional strong subadditivity, quantum Ruzsa triangle inequality, and quantum entropic Pl\"unnecke--Ruzsa inequalities for arbitrary input states. For repeated inputs, it gives sharp comparisons of entropy growth across admissible scales; in particular, the quantum doubling constant controls all higher admissible convolution entropies with optimal exponents. Together, these results bring submodular methods from additive combinatorics into the quantum setting and provide a systematic route to broad families of convolutional entropy inequalities.
\end{abstract}

\section{Introduction}

A central problem in additive combinatorics is to understand growth under addition. For finite subsets of an abelian group, this growth is reflected in the size and structure of their sumsets \cite{TaoVu}. For independent random variables, the corresponding quantity is the increase of Shannon entropy under convolution. This analogy has developed into the entropy method in additive combinatorics, which seeks probabilistic counterparts of sumset inequalities and uses the submodular structure of entropy to organize additive growth. This line of work was initiated by Ruzsa \cite{Ruzsa}, developed systematically by Tao \cite{Tao}, and placed within a broad framework of submodularity and fractional coverings by Madiman, Marcus, and Tetali \cite{MMT}. More recently, Green, Manners, and Tao \cite{GMT} further developed the theory of entropic doubling and used it to derive new structural results for finite sets with small doubling.

A prototypical instance of this structure is the inequality
\begin{equation}\label{intro-classical-ssa}
 H(X+Y+Z)+H(Y)\le H(X+Y)+H(Y+Z),
\end{equation}
valid for independent, finitely supported random variables taking values in an abelian group. It expresses a diminishing-returns
principle: the entropy gained by adding $X$ decreases when $Y$ has already been combined with the independent input $Z$. More generally,
for independent $X_0,X_1,\ldots,X_r$, the entropy-gain function
\begin{equation}\label{intro-classical-gain}
 J\longmapsto
 H\left(X_0+\sum_{j\in J}X_j\right)-H(X_0)
\end{equation}
is normalized, monotone, and submodular. As developed in \cite{MMT}, submodularity and fractional coverings organize a broad family of
entropy inequalities for sums. Applied to the function above, they convert bounds for smaller collections of summands into bounds for the full sum.

In this paper, we establish a noncommutative counterpart of this structure for discrete quantum convolution. A quantum analogue of \eqref{intro-classical-gain} requires a compatible family of convolution states indexed by subsets, rather than a single binary convolution. We construct such families by introducing globally weighted quantum convolutions. Our starting point is the binary convolution of Bu, Gu, and Jaffe
\cite{BGJPNAS,BGJGaussians}. For an odd prime $d$, their convolution $\rho\conv_{s,t}\sigma$ combines two $n$-qudit states by a unitary followed by a partial trace, where $s,t\in\Z_d^\times$ satisfy $s^2+t^2=1$. Its characteristic function obeys the multiplication rule $\Xi_{\rho\conv_{s,t}\sigma}(x) = \Xi_\rho(sx)\Xi_\sigma(tx)$, analogous to the corresponding identity for classical convolution. Guided by this identity, we assign a fixed global weight to each input and normalize the weights separately on every admissible subset. This produces a compatible family of multi-input convolutions that extends the binary construction and allows convolution entropies associated with different subsets to be compared. Our construction also applies when a multi-input convolution cannot be obtained through a sequence of admissible binary convolutions.

The entropy structure of these families is far from immediate. Coherent mixing can create entanglement, and discarding one output can increase entropy even when all inputs are pure. Consequently, the classical bound $H(X+Y)\le H(X)+H(Y)$ has no general counterpart for the input von Neumann entropies; for example, complementary pure stabilizer inputs can have a maximally mixed convolution. Moreover, ordinary strong subadditivity applied directly to the physical outputs does not compare the different convolution states that occur in the analogue of \eqref{intro-classical-ssa}. This difficulty already appears in two conjectures of Bu, Gu, and Jaffe \cite{BGJRuzsa}: the triangle inequality for quantum Ruzsa divergence and convolutional strong
subadditivity \cite[Conjectures 1 and 2]{BGJRuzsa}. They verified the latter, and hence the former, for stabilizer inputs and for states diagonal in the computational basis. These conjectures are the simplest three-input instances of the broader question addressed here: whether the entropy gains of compatible quantum convolutions inherit the submodular structure of classical convolution entropy.

\subsection{Main results}

Our main contributions are as follows.
\begin{itemize}
    \item The first result is a globally weighted construction of quantum convolutions (Theorem~\ref{product theorem} and Definition~\ref{compatible}). We assign a fixed global weight to each input and, for every admissible subset \(A\), normalize the corresponding weights to obtain a convolution state \(\CA_A\). The resulting family is compatible across admissible subsets. This construction extends the binary framework of \cite{BGJPNAS,BGJGaussians} and includes multi-input convolutions that cannot be obtained by successively applying admissible binary convolutions. We also realize each \(\CA_A\) as the output of a quantum channel.

    \item Our main result (Theorem~\ref{main theorem}) shows that the entropy gains of every such compatible family are governed by a polymatroidal geometry. Relative to any fixed nonempty admissible block \(R\), the physical entropy gains extend to a normalized, monotone, submodular function on the full subset lattice of the remaining inputs. This extension is defined even at subsets $J$ for which the physical convolution $\CA_{R\cup J}$ is not. Fractional subadditivity then yields a general hierarchy of convolutional entropy inequalities. For example, if $M = \{i_1,\ldots,i_r\}$, then, whenever the convolutions appearing below are admissible,
    \begin{equation}
        S(\CA_{R\cup M}) - S(\CA_R) \leq \sum_{j = 1}^{r} S(\CA_{R\cup\{i_j\}}) - S(\CA_R).
    \end{equation}
    Thus the entropy gain from adding all $r$ inputs is bounded by the sum of the gains obtained by adding them individually.

    \item As three-input consequences, we obtain the convolutional strong subadditivity (Corollary~\ref{corollary weighted-ssa})
    \begin{equation}
        S\big((\rho\conv_{s,s}\tau)\conv_{l,m}\sigma\big) + S(\sigma) \leq S(\rho\conv_{s,s}\sigma) + S(\sigma\conv_{s,s}\tau),
    \end{equation}
    whenever $2s^2 = 1$, $l^2 + m^2 = 1$, $m = ls$, and, the quantum Ruzsa triangle inequality (Corollary~\ref{corollary ruzsa})
    \begin{equation}
        S(\rho\conv_{s,s}\tau) + S(\sigma) \leq S(\rho\conv_{s,s}\sigma) + S(\sigma\conv_{s,s}\tau),
    \end{equation}
    under $2s^2 = 1$. These results resolve Conjectures~2 and~1, respectively, of \cite{BGJRuzsa} for arbitrary quantum states.

    \item At higher input counts, the same structure yields quantum entropic Pl\"unnecke--Ruzsa inequalities and sharp comparisons between repeated convolutions at different admissible scales (Corollaries~\ref{corollary four-input} and~\ref{corollary balanced-growth}). In particular, the entropy gain per added input is nonincreasing along admissible repetition scales. Equivalently, the multiplicative entropy growth of an admissible $m$-fold convolution is at most $\delta_q[\rho]^{m-1}$, where $\delta_q[\rho]$ is the quantum doubling constant. The exponent $m-1$ is optimal, already among states diagonal in the computational basis.
\end{itemize}

These results give a common mechanism for quantum entropy-growth inequalities across different weights, subsets, and repetition scales. The global formulation makes the arithmetic requirements explicit while allowing the entropy argument to proceed on the entire subset lattice.

\subsection{Proof method}

Our proof of the main result is based on a representation that connects convolution entropy with the marginal entropies of a single auxiliary state.
Inspired in part by the strong dynamical subadditivity of Roga, Fannes, and \.Zyczkowski \cite{RFZ}, we seek an entropy-preserving representation in which combining inputs becomes a tractable matrix operation.

Let $\mathcal H=(\mathbb C^d)^{\otimes n}$, let $D=d^n$, and let $\V=\Z_d^n\times\Z_d^n$ be the discrete phase space, of cardinality $N=D^2$. For $x\in\V$, denote by $w(x)$ the corresponding discrete Weyl operator. The operators $\{w(x):x\in\V\}$ form an orthogonal projective representation of the phase space. We associate with every state $\rho$ on $\mathcal H$ the positive characteristic kernel $K_\rho(x,y)=\Tr[\rho w(x)^\dagger w(y)]$. This is the positivity kernel appearing in the quantum Bochner theorem \cite{Bochner}. Two identities provide the bridge:
\begin{equation}\label{intro-bridge}
    S(K_\rho/N) = S(\rho) + \log D, \qquad K_{\rho\conv_{s,t}\sigma} = K_\rho^{[s]}\odot K_\sigma^{[t]},
\end{equation}
where $K^{[a]}(x,y)=K(ax,ay)$ and $\odot$ denotes entrywise multiplication. The first identity identifies the state entropy up to a universal additive constant; the second turns quantum convolution into a Hadamard product. For globally weighted convolutions, the latter becomes
\begin{equation}\label{intro-weighted-bridge}
    K_{\CA_A}^{[c_A]} = \bigodot_{i\in A}K_{\rho_i}^{[q_i]}.
\end{equation}
We then realize all these kernel products simultaneously through the marginals of one separable state $\Omega$. For every admissible $A$, their spectra give $S(\Omega_A)=S(\CA_A)+\log D$. The auxiliary state has marginals for every subset, including those for which a convolution normalization does not exist. Ordinary strong subadditivity makes $A\mapsto S(\Omega_A)$ submodular, and separability makes it monotone under inclusion. Subtracting the entropy of a fixed block cancels the common $\log D$ term and produces the extension in our main theorem. This simultaneous representation connects the individual convolution outputs to the full family of entropy inequalities.

\paragraph{Organization of the paper.}
In Section~\ref{prelim section}, we fix the Weyl conventions and recall discrete quantum convolution. In Section~\ref{kernel section}, we develop the characteristic-kernel method and the existence and channel realization of globally weighted convolutions. We prove the submodular extension theorem in Section~\ref{main section}. In Section~\ref{corollary section}, we derive convolutional strong subadditivity, the Ruzsa triangle inequality, fractional-cover growth bounds, and the sharp repeated-input estimates. Finally, in Section~\ref{conclusion section} we discuss the scope of these results and the additional ingredients needed for conditional, quotient, and inverse theories in quantum additive combinatorics.

%%%%%%%%%%%%%%%%%%%%%%%%%%%%%%%%%%%%%%%%
%%%%%%%%%%%%%%%%%%%%%%%%%%%%%%%%%%%%%%%%
%%%%%%%%%%%%%%%%%%%%%%%%%%%%%%%%%%%%%%%%

\section{Preliminaries}\label{prelim section}

We use the phase-space and characteristic-function conventions of Bu, Gu, and Jaffe \cite{BGJGaussians,BGJRuzsa}, and follow \cite{BGJRuzsa} for the notation and definitions concerning quantum convolution and Ruzsa-type quantities. Throughout, $d$ is an odd prime and $n \geq 1$. The one-qudit space is $\HH = \C^d$, and the $n$-qudit system is $\Hn$. We use
\begin{equation}
 \V = \Z_d^n\times\Z_d^n, \qquad D=d^n, \qquad N = |\V| = D^2.
\end{equation}
Here $\Z_d = \mathbb Z/d\mathbb Z$, identified with the field $\F_d$. All phase-space and convolution-coefficient arithmetic is in this field.

\subsection{Weyl operators and characteristic functions}

Put $\omega_d = e^{2\pi i/d}$ and $\zeta = \omega_d^{(d+1)/2}$. On the computational basis $\{\ket{j}:j\in\Z_d\}$ of $\HH$, let $X\ket{j} = \ket{j+1}$ and $Z\ket{j} = \omega_d^j\ket{j}$. The local Weyl operator is $w(p,q) = \zeta^{-pq}Z^pX^q$.
For the $n$-qudit system, write $x=(p,q)$ for $(\vec p,\vec q)\in\V$, suppressing arrows on phase-space vectors. Then
\begin{equation}\label{weyl}
    w(x) = \bigotimes_{j=1}^n w(p_j,q_j) = \omega_d^{-p\cdot q/2} Z^p X^q, \qquad [x,y] = p\cdot q' - q\cdot p'.
\end{equation}
Here $y = (p',q')$, the powers denote tensor products over the $n$ coordinates, and $1/2$ is the inverse of $2$ in $\Z_d$. Direct calculation gives
\begin{align}
    w(x)w(y) &= \omega_d^{[x,y]/2}w(x+y), &w(x)^\dagger &= w(-x),\label{weyl-product}\\
    w(x)^\dagger w(y) &= \omega_d^{-[x,y]/2}w(y-x), &\Tr[w(x)^\dagger w(y)] &= D \delta_{xy}.
\end{align}
The symbol $L^2(\Hn)$ denotes the Hilbert space of linear operators acting on $\Hn$ with the normalized Hilbert--Schmidt inner product $\langle A,B\rangle = D^{-1}\Tr(A^\dagger B)$. Then $\{w(x):x\in\V\}$ is an orthonormal basis. With the unnormalized trace inner product, the normalized basis is instead $\{w(x)/\sqrt{D}:x\in\V\}$.

We use the characteristic-function convention common to \cite{BGJPNAS,BGJGaussians,BGJRuzsa}:
\begin{equation}\label{characteristic}
    \Xi_\rho(x) = \Tr[\rho w(-x)], \qquad \rho = \frac{1}{D} \sum_{x\in \V} \Xi_\rho(x)w(x).
\end{equation}
This is a Fourier transform in an operator basis. The underlying phase space is abelian, but its Weyl representation is projective, as the phase in \eqref{weyl-product} records. Consequently, the ordinary positive-definiteness criterion for a classical characteristic function must be replaced by one that includes this phase; see \eqref{bochner} below.

\subsection{Discrete quantum convolution}

Following \cite[Definition 10]{BGJRuzsa}, let $s,t \in \Z_d^\times$ satisfy $s^2 + t^2 = 1$; we call such $(s,t)$ an admissible pair. The corresponding discrete beam-splitter unitary is defined by
\begin{equation}\label{beam-splitter}
    U_{s,t}\ket{i,j} = \ket{si + tj,-ti + sj}, \qquad i,j \in\Z_d^n.
\end{equation}
The quantum convolution of states $\rho$ and $\sigma$ is the state obtained by applying this unitary and discarding the second output:
\begin{equation}\label{binary-convolution}
    \rho\conv_{s,t}\sigma = \Tr_B\!\left[U_{s,t}(\rho\otimes\sigma)U_{s,t}^\dagger\right].
\end{equation}
We sometimes write $\conv$ when the chosen pair is fixed; the balanced case $s =t$ exists precisely when $2$ is a nonzero square in $\mathbb{Z}_d$. The characteristic function identity is
\begin{equation}\label{binary-characteristic}
 \Xi_{\rho\conv_{s,t}\sigma}(x)
 =\Xi_\rho(sx)\Xi_\sigma(tx)
 \quad\text{\cite[Lemma 11(1)]{BGJRuzsa}},
\end{equation}
which immediately implies $\rho\conv_{s,t}\sigma = \sigma\conv_{t,s}\rho$. Since convolution with a fixed coefficient pair is not assumed to be associative, all convolutions involving more than two inputs will be given with their coefficients and bracketing explicitly. In particular, the fixed-pair iteration of \cite[Lemma 11(3)]{BGJRuzsa} is
\begin{equation}\label{bgj-iteration}
    \conv_{s,t}^{0}\rho = \rho, \qquad \conv_{s,t}^{r+1}\rho = (\conv_{s,t}^{r}\rho)\conv_{s,t}\rho, \qquad r \geq 0.
\end{equation}
This iteration should be distinguished from the equal-weight multi-input convolution introduced in Subsection~\ref{repeated inputs and optimal coefficients}, for which we use different notation.

%%%%%%%%%%%%%%%%%%%%%%%%%%%%%%%%%%%%%
%%%%%%%%%%%%%%%%%%%%%%%%%%%%%%%%%%%%%
%%%%%%%%%%%%%%%%%%%%%%%%%%%%%%%%%%%%%

\section{The characteristic-kernel method}\label{kernel section}

Our proof of the main result rests on two observations. First, quantum convolution becomes entrywise multiplication of characteristic kernels (Theorem~\ref{product theorem}). Second, the normalized characteristic kernel has the same nonzero spectrum as the underlying state, up to a fixed multiplicity, so its entropy differs from the state entropy only by $\log D$ (Proposition~\ref{proposition spectrum}). Together, these facts allow ordinary entropy inequalities for a suitable auxiliary state to be transferred to quantum convolution.

\begin{definition}[Characteristic kernel]
For an operator $A$ on $\Hn$, its characteristic kernel is the $N\times N$ matrix
\begin{equation}\label{kernel}
    K_A(x,y) = \Tr[A w(x)^\dagger w(y)] = \omega_d^{-[x,y]/2} \Xi_A(x-y), \qquad x,y\in\V.
\end{equation}
\end{definition}
The phase in \eqref{kernel} is part of the definition. In general, the matrix $\Xi_\rho(x-y)$ without that phase is not positive.

We recall the finite-dimensional Bochner theorem in our conventions \cite{Bochner}: a function $f:\V\to\C$ is the characteristic function of a state on $\Hn$ if and only if $f(0) = 1$ and
\begin{equation}\label{bochner}
    \left[\omega_d^{-[x,y]/2}f(x-y)\right]_{x,y\in\V} \geq 0.
\end{equation}
The corresponding state is uniquely determined by $\rho_f = \frac{1}{D}\sum_{x\in\V}f(x)w(x)$.

\subsection{Globally weighted convolutions}

For a matrix indexed by $\V$ and a scalar $c\in\Z_d^\times$, let
\begin{equation}\label{reindexing}
    K^{[c]}(x,y) = K(cx,cy).
\end{equation}
This is simultaneous permutation of rows and columns, hence preserves the spectrum. For two matrices of the same size, write $(K\odot L)(x,y) = K(x,y)L(x,y)$.

\begin{theorem}\label{product theorem}
Let $\cI$ be a finite nonempty index set, let $\rho_i$ be states, and let $q_i\in\Z_d^\times$ for $i\in\cI$. If a nonempty $A \subseteq \cI$ satisfies $\sum_{i\in A} q_i^2 = c_A^2$, for some $c_A\in\Z_d^\times$, then, for each choice of the square root $c_A$, there exists a unique state $\CA_A$ such that
\begin{equation}\label{multi-kernel}
    K_{\CA_A}^{[c_A]} = \bigodot_{i\in A}K_{\rho_i}^{[q_i]}.
\end{equation}
In particular, for every admissible binary pair $(a,b)$, $K_{\rho\conv_{a,b}\sigma} = K_\rho^{[a]}\odot K_\sigma^{[b]}$. Furthermore, $\mathcal{C}_A$ admits a channel realization, and changing the sign of $c_A$ changes $\CA_A$ by parity and leaves its entropy unchanged.
\end{theorem}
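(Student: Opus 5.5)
The plan is to construct $\CA_A$ through its characteristic function and verify positivity with the Bochner criterion \eqref{bochner}. For a nonempty $A$ with $\sum_{i\in A}q_i^2=c_A^2$, I would define
\[
f_A(x)=\prod_{i\in A}\Xi_{\rho_i}\!\left(q_ic_A^{-1}x\right),\qquad x\in\V .
\]
Then $f_A(0)=1$. The first check is that the phases match. By \eqref{kernel} and the bilinearity of $[\cdot,\cdot]$,
\[
K_{\rho_i}^{[q_i]}(x,y)=\omega_d^{-q_i^2[x,y]/2}\,\Xi_{\rho_i}(q_i(x-y)).
\]
Taking the entrywise product over $i\in A$ gives
\[
\omega_d^{-c_A^2[x,y]/2}\prod_{i\in A}\Xi_{\rho_i}(q_i(x-y)).
\]
This is exactly $\omega_d^{-[c_Ax,c_Ay]/2}f_A(c_A(x-y))$, that is, the kernel of $f_A$ reindexed by $c_A$. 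The condition $\sum q_i^2=c_A^2$ is used precisely at this point: it makes the projective phases multiply correctly.

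Next I would establish positivity. Each $K_{\rho}$ is positive semidefinite because $K_\rho(x,y)=\Tr[\rho\,w(x)^\dagger w(y)]$ is a Gram matrix of the vectors $\rho^{1/2}w(x)$ in the Hilbert--Schmidt inner product. Reindexing by $q_i\in\Z_d^\times$ is a simultaneous row and column permutation, so it preserves positivity. The Schur product theorem then shows that $\bigodot_{i\in A}K_{\rho_i}^{[q_i]}$ is positive semidefinite. Undoing the reindexing by $c_A^{-1}$ shows that the Bochner kernel of $f_A$ is positive semidefinite, so \eqref{bochner} gives a state $\CA_A=\rho_{f_A}$ satisfying \eqref{multi-kernel}.

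Uniqueness follows because a kernel determines its characteristic function through the column $y=0$, since $K_A(x,0)=\Xi_A(x)$. The characteristic function in turn determines the state by \eqref{characteristic}. The binary statement is the case $A=\{1,2\}$, $q=(a,b)$, $c_A=1$, which is consistent with \eqref{binary-characteristic}. For the sign of $c_A$, replacing $c_A$ by $-c_A$ turns $f_A(x)$ into $f_A(-x)$. This is the characteristic function of $P\CA_AP$, where $P\ket j=\ket{-j}$ is the parity unitary, because $Pw(x)P=w(-x)$. Hence the spectrum, and therefore the entropy, is unchanged.

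For the channel realization, I would generalize the beam splitter \eqref{beam-splitter}. The vector $u=(q_i/c_A)_{i\in A}\in\Z_d^{A}$ satisfies $u\cdot u=1$. By Witt's theorem, the orthogonal group $O_{|A|}(\Z_d)$ acts transitively on vectors of norm $1$, so there is an orthogonal matrix $O$ with first row $u$. I would then define the permutation unitary $U_O\ket{i_1,\dots,i_{|A|}}=\ket{O(i_1,\dots,i_{|A|})}$, with $O$ acting on $\Z_d^n$-valued entries coordinatewise. A direct computation should give $U_O^\dagger\bigl(w(x)\otimes I\otimes\cdots\otimes I\bigr)U_O=\bigotimes_{i}w(u_ix)$. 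In this step the phases $\omega_d^{-p\cdot q/2}$ recombine because $O^TO=I$ on both the $p$- and $q$-parts.

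Tracing out all outputs except the first then gives the characteristic function $\prod_i\Xi_{\rho_i}(u_ix)=f_A(x)$. Therefore $\CA_A=\Tr_{\text{rest}}\bigl[U_O(\bigotimes_{i\in A}\rho_i)U_O^\dagger\bigr]$, which is a channel in the inputs. I expect this step to be the main obstacle: it requires verifying the covariance identity under the paper's phase conventions, including the sign conventions that make it agree with the binary case, as well as the orthogonal completion over $\F_d$. If Witt's theorem is awkward to invoke, my fallback is an explicit Householder-type reflection $I-2vv^T/(v\cdot v)$ that maps $e_1$ to $u$. Such a $v$ with $v\cdot v\neq0$ exists: take $v=e_1-u$ when $u\neq e_1$, since then $v\cdot v=2-2u_1$, and otherwise use a suitable correction.
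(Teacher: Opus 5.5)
Your existence, uniqueness, binary and parity arguments are essentially the paper's: normalize to $b_i=q_i/c_A$, apply the Schur product theorem and the Bochner criterion \eqref{bochner}, and use the parity unitary. The genuine difference is the channel realization.

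The paper never writes down a unitary. It defines $\Phi_A^*(w(x))=\bigotimes_{i\in A}w(b_ix)$ and uses $\sum_i b_i^2=1$ to check that this is a unital $*$-homomorphism. It takes the trace adjoint as the channel. It then obtains ``unitary followed by partial trace'' abstractly, from the fact that every representation of a full matrix algebra is unitarily equivalent to $X\mapsto X\otimes I$.

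You instead complete $u$ to an orthogonal matrix over $\F_d$ and use the permutation unitary $U_O$. This works. One has $U_OZ^{\vec p}X^{\vec q}U_O^\dagger=Z^{O^{-T}\vec p}X^{O\vec q}=Z^{O\vec p}X^{O\vec q}$ and $(O\vec p)\cdot(O\vec q)=\vec p\cdot\vec q$, so $U_Ow(\vec x)U_O^\dagger=w(O\vec x)$. Hence $U_O^\dagger(w(x)\otimes I\otimes\cdots\otimes I)U_O=\bigotimes_i w(u_ix)$, precisely because the first row of $O$ is $u$. The matrix with rows $(s,t)$ and $(-t,s)$ reproduces \eqref{beam-splitter}. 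In effect your $U_O$ is an explicit implementation of the paper's $\Phi_A$, namely $\Phi_A^*(X)=U_O^\dagger(X\otimes I)U_O$.

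The paper's route is shorter and needs no orthogonal geometry over $\F_d$. Yours pays with Witt's theorem, but it produces a concrete multi-port beam splitter extending the Bu--Gu--Jaffe construction. That beam splitter is a Clifford unitary permuting the computational basis linearly, and complete positivity and trace preservation are automatic.

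There are two small slips.
\begin{enumerate}
\item With $\langle A,B\rangle\propto\Tr(A^\dagger B)$, the Gram matrix of the vectors $\rho^{1/2}w(x)$ is $\Tr[\rho\,w(y)w(x)^\dagger]=\omega_d^{[x,y]}K_\rho(x,y)$, not $K_\rho$. Use $w(x)\rho^{1/2}$ as in Proposition~\ref{proposition spectrum}, or simply the ``only if'' half of Bochner.
\item In your Householder fallback, $v=e_1-u$ can be isotropic even when $u\neq e_1$. For example, take $d=5$ and $q=(1,1,2)$, so $c_A=1$ and $u=(1,1,2)$. Since $(e_1-u)\cdot(e_1-u)+(e_1+u)\cdot(e_1+u)=4\neq0$, in that case reflect along $v=e_1+u$, which sends $e_1\mapsto -u$, and compose with $-I$.
\end{enumerate}
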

\begin{proof}
The binary case follows directly from the characteristic function identity \eqref{binary-characteristic}:
\begin{align}
    K_\rho(ax,ay)K_\sigma(bx,by) &= \omega_d^{-(a^2+b^2)[x,y]/2}\Xi_\rho(a(x-y))\Xi_\sigma(b(x-y))\\
    &= \omega_d^{-[x,y]/2}\Xi_{\rho\conv_{a,b}\sigma}(x-y),
\end{align}
as desired.

For the multi-input case, put $b_i = q_i/c_A$ for $i\in A$. Then $\sum_i b_i^2 = 1$. The Schur product theorem shows that $\bigodot_i K_{\rho_i}^{[b_i]}$ is positive semidefinite; its entries are
\begin{equation}
    \omega_d^{-[x,y]/2}\prod_{i\in A} \Xi_{\rho_i}(b_i(x-y)).
\end{equation}
The function in the product equals one at zero. The Bochner criterion \eqref{bochner} therefore proves existence of a state $\mathcal{C}_A$ with characteristic function
\begin{equation}\label{multi-characteristic}
    \Xi_{\CA_A}(x) = \prod_{i\in A} \Xi_{\rho_i}\left(\frac{q_i}{c_A}x\right),
\end{equation}
and Fourier inversion proves uniqueness. Reindexing by $c_A$ gives \eqref{multi-kernel}. Changing $c_A$ to $-c_A$ composes the output with the parity unitary $\ket{x}\mapsto\ket{-x}$, and hence does not change its entropy.

\medskip
For coefficients $b_i\in\Z_d$ with $\sum_{i\in A} b_i^2 = 1$, the rule
\begin{equation}\label{weyl-embedding}
    \Phi_A^*(w(x)) = \bigotimes_{i\in A}w(b_ix)
\end{equation}
extends linearly to a unital $*$-homomorphism from the algebra of operators on $\Hn$ into the algebra of operators on $(\Hn)^{\otimes |A|}$. To see this, note that the Weyl operators are a basis, so \eqref{weyl-embedding} defines a linear map. It preserves the identity and adjoints. Moreover,
\begin{align}
    \Phi_A^*(w(x))\Phi_A^*(w(y)) &= \omega_d^{[x,y]\sum_i b_i^2/2}\bigotimes_i w(b_i(x+y))\\
    &= \Phi_A^*(w(x)w(y)).
\end{align}
Bilinearity gives multiplicativity on all operators. A $*$-homomorphism is completely positive, so its trace adjoint is completely positive and trace preserving.

We now identify its action on product inputs. For every $x\in\V$,
\begin{align}
    \Xi_{\Phi_A\left(\bigotimes_{i\in A}\rho_i\right)}(x) &= \Tr\!\left[\Phi_A\left(\bigotimes_{i\in A}\rho_i\right)w(-x)\right]\\
    &= \Tr\!\left[\left(\bigotimes_{i\in A}\rho_i\right)\Phi_A^*(w(-x))\right]\\
    &= \prod_{i\in A} \Tr[\rho_iw(-b_ix)] = \prod_{i\in A} \Xi_{\rho_i}(b_ix).
\end{align}
For $b_i = q_i/c_A$, the final expression is precisely $\Xi_{\CA_A}(x)$ by \eqref{multi-characteristic}. Since the characteristic function uniquely determines the state, it follows that
\begin{equation}
    \Phi_A\left(\bigotimes_{i\in A}\rho_i\right) = \CA_A.
\end{equation}
Equivalently, a finite-dimensional representation of the full matrix algebra is unitarily equivalent to $A\mapsto A\otimes I$ with the appropriate multiplicity. Thus this channel can be realized by a unitary followed by a partial trace.
\end{proof}

\begin{remark}[Harmonic-analytic interpretation]
The finite-dimensional harmonic analysis also explains the normalization conditions. The Weyl cocycle supplies the phase in the characteristic kernel, and Hadamard multiplication adds the phase coefficients associated with the squared input weights. The same identity therefore accounts for both the convolution rule and the arithmetic constraints on admissible subsets.
\end{remark}

\begin{remark}
If $B,C$ are disjoint and $B,C,B\cup C\in\Ad$, then
\begin{equation}\label{grouping}
    \CA_{B\cup C} = \CA_B\conv_{c_B/c_{B\cup C},\,c_C/c_{B\cup C}}\CA_C.
\end{equation}
Indeed, the two displayed binary coefficients are nonzero and their squares sum to one. Applying \eqref{binary-characteristic} and then \eqref{multi-characteristic} yields the same characteristic function on both sides.
\end{remark}

Theorem~\ref{product theorem} motivates the following

\begin{definition}[Globally weighted convolutions]\label{compatible}
Let $\cI$ be a finite nonempty index set, let $\rho_i$ be states, and let $q_i\in\Z_d^\times$ for $i\in\cI$. Let $\Ad$ denote the collection of nonempty subsets $A$ satisfying the admissibility condition $\sum_{i\in A} q_i^2 = c_A^2$, for some $c_A\in\Z_d^\times$. For each $A\in\Ad$, we call the state $\CA_A$ of Theorem~\ref{product theorem} the \textit{globally weighted convolution} associated with the inputs and weights indexed by $A$. We call the family $\{\CA_A:A\in\Ad\}$ the compatible family of globally weighted convolutions associated with $(q_i)_{i\in\cI}$.
\end{definition}

For each subset $A$, either choice of square root ($c_A$ or $-c_A$) may be used independently; the resulting entropy quantities are unchanged. A singleton is always admissible, and $S(\CA_{\{i\}}) = S(\rho_i)$. The collection $\Ad$ need not be closed under unions or intersections. In particular, no unspecified convolution is assigned to a subset whose squared-weight sum is zero or is a nonsquare.

\subsection{Entropy of the normalized kernel}

\begin{proposition}[Spectrum and entropy of the characteristic kernel]\label{proposition spectrum}
Let $\rho$ be a quantum state on $\Hn$, and let $K_\rho$ be its characteristic kernel. If $\lambda_1,\ldots,\lambda_D$ are the eigenvalues of $\rho$,
counted with multiplicity, then the eigenvalues of the normalized kernel $\frac{1}{N} K_\rho$ are $\frac{\lambda_1}{D},\ldots,\frac{\lambda_D}{D}$, each repeated $D$ times. In particular, $\frac{1}{N} K_\rho$ is a density matrix on $\ell^2(\V)$ and
\begin{equation}\label{spectral-entropy}
    S\left(\frac{1}{N} K_\rho\right) = S(\rho)+\log D.
\end{equation}
\end{proposition}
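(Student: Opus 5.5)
The plan is to factor $K_\rho$ as a Gram-type matrix built from the Weyl vectors, so that its nonzero spectrum can be read off from an associated superoperator on $L^2(\Hn)$. Write $K_\rho(x,y)=\Tr[\rho\, w(x)^\dagger w(y)] = \Tr[(w(x)\rho^{1/2})^\dagger (w(y)\rho^{1/2})]$. With the unnormalized Hilbert--Schmidt inner product, this exhibits $K_\rho = V^\dagger V$, where $V:\ell^2(\V)\to L^2(\Hn)$ is given by $V e_y = w(y)\rho^{1/2}$. Positivity of $K_\rho$ is then immediate. Moreover, the nonzero spectrum of $V^\dagger V$ coincides, with multiplicities, with that of $VV^\dagger$.

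The second step is to compute $VV^\dagger$ as a superoperator on $L^2(\Hn)$. For any operator $X$,
\begin{equation}
VV^\dagger(X) = \sum_{y\in\V} w(y)\rho^{1/2}\,\Tr[\rho^{1/2}w(y)^\dagger X].
\end{equation}
This is the map $X\mapsto \sum_y w(y)\rho^{1/2}\langle\!\langle w(y)\rho^{1/2}, X\rangle\!\rangle$. I expect the cleanest route is the completeness identity for the Weyl basis, $\sum_y w(y)\Tr[w(y)^\dagger Y] = D\,Y$, which is just the expansion of $Y$ in the orthonormal basis $\{w(y)/\sqrt D\}$. Applying it with $Y=\rho^{1/2}X$... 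The difficulty is that $\rho^{1/2}$ appears on the right of $w(y)$, so a rearrangement is needed first.

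To handle this, I will use the vectorization $L^2(\Hn)\cong\Hn\otimes\Hn$, under which left multiplication by $w(y)$ acts as $w(y)\otimes I$ and right multiplication by $\rho^{1/2}$ acts as $I\otimes(\rho^{1/2})^T$. Then $VV^\dagger = (I\otimes \bar\rho^{1/2})\big(\sum_y |w(y)\rangle\!\rangle\langle\!\langle w(y)|\big)(I\otimes \bar\rho^{1/2})$, up to the transpose convention. The middle operator is $\sum_y \mathrm{vec}(w(y))\mathrm{vec}(w(y))^\dagger = D\cdot I_{D^2}$, because the vectors $\mathrm{vec}(w(y))/\sqrt D$ form an orthonormal basis of $\C^{D^2}$. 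Hence $VV^\dagger = D\,(I\otimes\bar\rho)$. Its eigenvalues are $D\lambda_j$, each with multiplicity $D$, since $\bar\rho$ has the same spectrum as $\rho$. This reordering step is the only point requiring care, and the vectorization makes it routine.

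It follows that $K_\rho$, an $N\times N$ matrix, has eigenvalues $D\lambda_j$, each repeated $D$ times. This accounts for all $D\cdot D=N$ eigenvalues, zeros included, since $VV^\dagger$ also acts on an $N$-dimensional space. Dividing by $N=D^2$ gives the eigenvalues $\lambda_j/D$ with multiplicity $D$. These are nonnegative and sum to $\sum_j \lambda_j=1$, so $\frac1N K_\rho$ is a density matrix. Finally, the entropy is
\begin{equation}
-\sum_j D\cdot\frac{\lambda_j}{D}\log\frac{\lambda_j}{D} = S(\rho)+\log D,
\end{equation}
which is \eqref{spectral-entropy}.
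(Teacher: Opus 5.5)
Your proposal is correct and follows essentially the same route as the paper: the Gram factorization of $K_\rho$ through the vectors $w(y)\rho^{1/2}$, the identification of $VV^\dagger$ with a multiple of right multiplication by $\rho$ on $L^2(\Hn)$, and equality of dimensions to match zero eigenvalues. The only difference is cosmetic. You handle the reordering step by vectorization; the paper handles it by cyclicity of the trace, $\Tr[\rho^{1/2}w(y)^\dagger X]=\Tr[w(y)^\dagger X\rho^{1/2}]$, followed by Weyl completeness, which gives $VV^\dagger(X)=D\,X\rho$ directly. That map is exactly the operator your vectorized expression $D(I\otimes\bar\rho)$ represents.
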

\begin{proof}
For a state $\rho$, define vectors in $L^2(\Hn)$ by
\begin{equation}
    a_x^\rho := w(x)\rho^{1/2}, \qquad x\in\V.
\end{equation}
Then, evidently
\begin{equation}\label{kernel Hilbert Schmidt}
    \langle a_x^\rho,a_y^\rho\rangle = \frac{1}{D} \Tr\!\left[\rho\,w(x)^\dagger w(y)\right] = \frac{1}{D} K_\rho(x,y).
\end{equation}
Let $\{e_x:x\in\V\}$ be the canonical orthonormal basis of $\ell^2(\V)$, and define
\begin{equation}
    F:\ell^2(\V) \longrightarrow L^2(\Hn), \qquad \text{by} \quad Fe_x = \frac{1}{\sqrt{D}}\,a_x^\rho.
\end{equation}
Equivalently, using Dirac notation for vectors in the Hilbert--Schmidt space, $F = \frac{1}{\sqrt D} \sum_{x\in\V} \ket{a_x^\rho}\!\bra{e_x}$. It follows from \eqref{kernel Hilbert Schmidt} and $N = D^2$ that
\begin{equation}
    \langle e_x,F^\dagger F e_y\rangle = \langle Fe_x,Fe_y\rangle = \frac{1}{D} \langle a_x^\rho,a_y^\rho\rangle = \frac{1}{N} K_\rho(x,y).
\end{equation}
Hence
\begin{equation}\label{FdaggerF-kernel}
    F^\dagger F = \frac{1}{N} K_\rho.
\end{equation}
On the other hand, for $A\in\LL(\Hn)$, we have
\begin{align}
    FF^\dagger(A) &= \sum_{x\in\V} Fe_x\,\langle Fe_x,A\rangle\\
    &= \frac{1}{D^2} \sum_{x\in\V} \Tr\!\left[A\rho^{1/2}w(x)^\dagger\right]w(x)\rho^{1/2}\\
    &= \frac{1}{D} \left(\frac1{D} \sum_{x\in\V} \Xi_{A\rho^{1/2}}(x) w(x)\right)\rho^{1/2}\\
    \label{FFdagger-action}
    &= \frac{1}{D} A\rho,
\end{align}
where in the last line we used the second identity in \eqref{characteristic}. Now, choose an orthonormal eigenbasis $\{\ket{\psi_j}:1\leq j\leq D\}$ of $\rho$, so that $\rho = \sum_{j=1}^D \lambda_j\ket{\psi_j}\!\bra{\psi_j}$. The operators
\begin{equation}
    E_{ij} := \sqrt{D}\,\ket{\psi_i}\!\bra{\psi_j}, \qquad 1 \leq i,j \leq D,
\end{equation}
form an orthonormal basis of $L^2(\Hn)$. By \eqref{FFdagger-action}, we get
\begin{equation}
    FF^\dagger(E_{ij}) = \frac{1}{D} E_{ij}\rho = \frac{\lambda_j}{D}E_{ij}.
\end{equation}
Thus the eigenvalues of $FF^\dagger$ are $\lambda_j/D$, with $D$ occurrences for each $j$. The operators $F^\dagger F$ and $FF^\dagger$ have the same
nonzero eigenvalues, including multiplicities. Moreover, $\dim\ell^2(\V) = N = L^2(\Hn)$, so their zero eigenvalues also have the same multiplicity. In view of \eqref{FdaggerF-kernel}, the eigenvalues of $N^{-1}K_\rho$ are therefore $\lambda_j/D$, each repeated $D$ times. They are nonnegative and sum to one, so $N^{-1}K_\rho$ is a state.

Finally,
\begin{align}
    S\left(\frac{1}{N} K_\rho\right) &= -\sum_{j=1}^D D\,\frac{\lambda_j}{D}\log\frac{\lambda_j}{D}\\
    &= -\sum_{j=1}^D \lambda_j\log\lambda_j + \log{D}\\
    &= S(\rho) + \log{D},
\end{align}
which prove \eqref{spectral-entropy}.
\end{proof}

%%%%%%%%%%%%%%%%%%%%%%%%%%%%%%%%%%%%%%%%%
%%%%%%%%%%%%%%%%%%%%%%%%%%%%%%%%%%%%%%%%%
%%%%%%%%%%%%%%%%%%%%%%%%%%%%%%%%%%%%%%%%%

\section{Submodularity of quantum convolution entropy}\label{main section}

We now fix inputs $(\rho_i)_{i\in\cI}$ and nonzero weights $(q_i)_{i\in\cI}$. All convolutions in this section belong to the compatible family of Definition~\ref{compatible}, unless a new choice of weights is explicitly made.

A real function $g$ on the subsets of a finite set is normalized if $g(\varnothing) = 0$, monotone if $g(A) \leq g(B)$ for $A \subseteq B$, and submodular if
\begin{equation}\label{submodular-definition}
    g(A) + g(B) \geq g(A\cap B) + g(A\cup B).
\end{equation}
A normalized monotone submodular function is also called a polymatroid rank function; integrality is not part of this convention. The (equivalent) diminishing-returns form of \eqref{submodular-definition} is
\begin{equation}\label{diminishing-definition}
    g(A\cup\{i\}) - g(A) \geq g(B\cup\{i\}) - g(B), \qquad A \subseteq B, \quad i\notin B.
\end{equation}

For a nonempty finite set $M$, a fractional cover consists of a family $\cF$ of nonempty subsets of $M$ and real weights $\alpha_J \geq 0$ such that
\begin{equation}
    \sum_{\substack{J\in\cF\\i\in J}} \alpha_J \geq 1, \qquad i\in M.
\end{equation}
Weights equal to zero may be omitted. Equality for every $i$ defines a fractional partition. The singleton cover has weight one on each singleton. The family of all $\ell$-element subsets of an $r$-element set is a fractional partition when every member is given weight $\binom{r-1}{\ell-1}^{-1}$.

\medskip
We are now ready to state and prove our main result.

\begin{theorem}\label{main theorem}
Let $R\in\Ad$ be nonempty. The function
\begin{equation}
    J \longmapsto S(\mathcal{C}_{R \cup J}) - S(\mathcal{C}_R), \qquad J\subseteq\cI\setminus R, \quad R \cup J \in \Ad,
\end{equation}
admits a normalized, monotone, submodular extension $g_R$ to $2^{\cI\setminus R}$. In particular,
\begin{equation}\label{anchored-physical}
    g_R(J) = S(\CA_{R\cup J}) - S(\CA_R), \qquad \text{whenever } R\cup J\in\Ad.
\end{equation}
Furthermore, if $(\alpha_J)_{J\in\cF}$ is a fractional cover of a nonempty $M \subseteq \cI\setminus R$, and if $R\cup J\in\Ad$ whenever $\alpha_J > 0$, then for $R\cup M\in\Ad$, we have
\begin{equation}\label{fractional-growth}
    S(\CA_{R\cup M}) - S(\CA_R) \leq \sum_{J\in\cF} \alpha_J \Big(S(\CA_{R\cup J})-S(\CA_R)\Big).
\end{equation}
No admissibility assumptions are required for subsets other than those explicitly appearing above.
\end{theorem}

\begin{remark}
The inequality in \eqref{fractional-growth} is a quantum extension of the fractional-cover entropic Pl\"unnecke–Ruzsa inequality of Madiman, Marcus, and Tetali \cite[Theorem 2.7]{MMT}. For diagonal inputs, every compatible convolution is an ordinary weighted sum of independent variables:
\begin{equation}\label{diagonal-specialization}
    \rho_i = \sum_z p_i(z)\proj{z} \quad \Longrightarrow \quad S(\CA_A) = H\left(c_A^{-1}\sum_{i\in A} q_iX_i\right) = H\left(\sum_{i\in A} q_iX_i\right).
\end{equation}
One may verify this directly from the characteristic function, or by evaluating the channel of Proposition~\ref{product theorem} on diagonal inputs. The singleton-cover and repeated-input bounds are direct entropy-growth counterparts of the Pl\"unnecke--Ruzsa principle developed in \cite{Tao,MMT}.
Their quantum extension concerns arbitrary density operators, including noncommuting inputs. The finite-field square condition is required for the quantum normalization used here; it is not a restriction on forming the corresponding classical weighted sums.
\end{remark}

\begin{proof}[Proof of Theorem~\ref{main theorem}]
As in Proposition~\ref{proposition spectrum}, for a state $\rho$, we define
\begin{equation}
    a_x^\rho := w(x)\rho^{1/2}\in L^2(\Hn), \qquad x\in\V.
\end{equation}
Then, by \eqref{kernel Hilbert Schmidt} we have $\langle a_x^\rho,a_y^\rho\rangle = \frac{1}{D} K_\rho(x,y)$. In particular, $\|a_x^\rho\|^2 = 1/D$, so that $\sqrt D\,a_x^\rho$ is a unit vector in $L^2(\Hn)$. Take one copy of $L^2(\Hn)$ for each input and define
the auxiliary state
\begin{equation}\label{omega}
    \Omega := \frac{1}{N} \sum_{x\in\V} \bigotimes_{i\in\cI} \proj{\sqrt{D}\,a_{q_i x}^{\rho_i}}.
\end{equation}
Thus $\Omega$ is a convex combination of product pure states on $\bigotimes_{i\in\cI} L^2(\Hn)$. Since each factor in \eqref{omega} is a rank-one state, tracing out registers simply removes the corresponding factors while retaining the common label $x$ in the remaining ones. Hence, for every nonempty $A\subseteq\cI$,
\begin{equation}
    \Omega_A = \frac{1}{N} \sum_{x\in\V} \bigotimes_{i\in A} \proj{\sqrt{D}\,a_{q_i x}^{\rho_i}}.
\end{equation}
Let $\{e_x:x\in\V\}$ denote the canonical orthonormal basis of $\ell^2(\V)$ and define
\begin{equation}
    F_A:\ell^2(\V) \longrightarrow \bigotimes_{i\in A} L^2(\Hn)
\end{equation}
by
\begin{equation}\label{FA}
    F_Ae_x := \frac{1}{\sqrt{N}} \bigotimes_{i\in A} \sqrt{D}\,a_{q_i x}^{\rho_i}.
\end{equation}
Equivalently, in Dirac notation for the Hilbert--Schmidt factors,
\begin{equation}
    F_A = \frac{1}{\sqrt{N}} \sum_{x\in\V} \left(\bigotimes_{i\in A} \ket{\sqrt{D}\,a_{q_i x}^{\rho_i}}\right)\bra{e_x}.
\end{equation}
It follows immediately from \eqref{omega} that
\begin{equation}
    F_AF_A^\dagger = \Omega_A.
\end{equation}
On the other hand, using \eqref{kernel Hilbert Schmidt},
\begin{align}
    \langle e_x,F_A^\dagger F_A e_y\rangle &= \langle F_A e_x,F_A e_y\rangle\\
    &= \frac{1}{N} \prod_{i\in A} \left\langle\sqrt{D}\,a_{q_i x}^{\rho_i},\sqrt{D}\,a_{q_i y}^{\rho_i}\right\rangle\\
    &= \frac{1}{N} \prod_{i\in A} K_{\rho_i}(q_i x,q_i y).
\end{align}
Therefore
\begin{equation}
    F_A^\dagger F_A = \frac{1}{N} \bigodot_{i\in A} K_{\rho_i}^{[q_i]}.
\end{equation}
Since $F_AF_A^\dagger$ and $F_A^\dagger F_A$ have the same nonzero eigenvalues, including multiplicities, and both have trace one, we obtain
\begin{equation}\label{marginalss}
    S(\Omega_A) = S\left(\frac{1}{N} \bigodot_{i\in A} K_{\rho_i}^{[q_i]}\right)
\end{equation}
for every nonempty $A\subseteq\cI$.

Now we define the auxiliary set function
\begin{equation}
    h(A) := S(\Omega_A), \qquad A\subseteq\mathcal I,
\end{equation}
with $h(\varnothing) = 0$. This set function is normalized, monotone, and submodular. To verify submodularity, let $A,B \subseteq \mathcal{I}$. When $A \cap B \neq \varnothing$, we apply strong subadditivity of $S$ to the pairwise disjoint register sets $A \setminus B$, $A \cap B$, and $B \setminus A$. This gives
\begin{equation}
    S(\Omega_A) + S(\Omega_B) \geq S(\Omega_{A \cap B}) + S(\Omega_{A \cup B}),
\end{equation}
or equivalently,
\begin{equation}
    h(A) + h(B) \geq h(A \cap B) + h(A \cup B).
\end{equation}
When $A \cap B = \varnothing$, the same inequality follows from ordinary subadditivity, since $h(\varnothing) = 0$. Finally, because $\Omega$ is separable, its conditional entropy is nonnegative across every bipartition. Hence, whenever $A \subseteq b$,
\begin{equation}
    h(B) - h(A) = S(B \setminus A \mid A)_\Omega \geq 0,
\end{equation}
which proves monotonicity.

Moreover, if $A\in\Ad$, then \eqref{marginalss} combined with Theorem~\ref{product theorem} and Proposition~\ref{proposition spectrum} gives
\begin{equation}\label{entropy-bridge}
 h(A)=S(\CA_A)+\log D.
\end{equation}
Next, we define the function $g_R$ on $2^{\mathcal{I}\setminus R}$ by
\begin{equation}
    g_R(J) := h(R\cup J) - h(R).
\end{equation}
this function is clearly normalized, and its monotonicity follows directly from the corresponding property of $h$. For $J,L\subseteq\cI\setminus R$, submodularity of $h$ applied to $R\cup J$ and $R\cup L$ gives
\begin{equation}
    g_R(J) + g_R(L) \geq g_R(J\cap L) + g_R(J\cup L).
\end{equation}
Equation~\eqref{anchored-physical} follows fro \eqref{entropy-bridge}.

Finally, we prove the fractional-cover inequality \eqref{fractional-growth} for $g_R$. Order $M = \{i_1,\ldots,i_r\}$ and put $P_j = \{i_1,\ldots,i_j\}$, with $P_0 = \varnothing$. Let
\begin{equation}
    d_j = g_R(P_j) - g_R(P_{j-1}) \geq 0.
\end{equation}
Expanding $g_R(J)$ in this order and using the diminishing-returns inequality \eqref{diminishing-definition} for $g_R$ gives
\begin{equation}
    g_R(J) = \sum_{j:i_j\in J} \bigl[g_R((J\cap P_{j-1})\cup\{i_j\}) - g_R(J\cap P_{j-1})\bigr] \geq \sum_{j:i_j\in J} d_j.
\end{equation}
Multiplying by $\alpha_J$, summing over $J$, and using the cover condition and $d_j \geq 0$, we obtain
\begin{equation}\label{fractional-abstract}
    \sum_{J\in\cF} \alpha_J g_R(J) \geq \sum_{j=1}^r d_j \sum_{J:i_j\in J} \alpha_J \geq \sum_{j=1}^r d_j = g_R(M).
\end{equation}
Translating this inequality for the displayed admissible subsets using \eqref{anchored-physical} proves \eqref{fractional-growth}. All intermediate quantities in the proof are marginals of $\Omega$ and exist even when they have no physical convolution interpretation.
\end{proof}

\begin{remark}
There is a general cancellation rule. If an entropy inequality for the marginals of \eqref{omega} has the form $\sum_A \gamma_A h(A) \geq 0$, all its nonzero terms are indexed by $A\in\Ad$, and $\sum_A \gamma_A = 0$, then
\begin{equation}\label{constant-cancellation}
    \sum_A \gamma_A S(\CA_A) \geq 0.
\end{equation}
The condition on the sum of the coefficients cancels the common $\log D$ shift. It should not be confused with balance of binary convolution coefficients or with the usual variable-by-variable notion of a balanced classical information inequality.
\end{remark}

%%%%%%%%%%%%%%%%%%%%%%%%%%%%%%%%%%%%%%%%%%%%
%%%%%%%%%%%%%%%%%%%%%%%%%%%%%%%%%%%%%%%%%%%%
%%%%%%%%%%%%%%%%%%%%%%%%%%%%%%%%%%%%%%%%%%%%

\section{Entropy inequalities from submodularity}\label{corollary section}

\subsection{Convolutional strong subadditivity}

\begin{corollary}[Compatible submodularity]\label{corollary submodularity}
If $A\subseteq B$ and $A,B\in\Ad$, then $S(\CA_A) \leq S(\CA_B)$. If $A,B,A\cap B,A\cup B\in\Ad$, with $A\cap B \neq \varnothing$, then
\begin{equation}\label{physical-submodularity}
    S(\CA_A) + S(\CA_B) \geq S(\CA_{A\cap B}) + S(\CA_{A\cup B}).
\end{equation}
Its deficit is exactly $I(A\setminus B:B\setminus A\mid A\cap B)_\Omega = S(\CA_A) + S(\CA_B) - S(\CA_{A\cap B}) - S(\CA_{A\cup B})$.
\end{corollary}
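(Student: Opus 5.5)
The plan is to read everything off the auxiliary state $\Omega$ of \eqref{omega} and the set function $h(A)=S(\Omega_A)$ from the proof of Theorem~\ref{main theorem}. That proof already shows that $h$ is monotone and submodular on all of $2^{\cI}$. It also gives the bridge \eqref{entropy-bridge}, namely $h(A)=S(\CA_A)+\log D$ for every $A\in\Ad$. Each claim of the corollary is a statement about $h$ evaluated only at admissible subsets, so the common $\log D$ shift will cancel.

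\textbf{Monotonicity.} Let $A\subseteq B$ with $A,B\in\Ad$. Monotonicity of $h$, which came from separability of $\Omega$, gives $h(A)\le h(B)$. Subtracting $\log D$ from both sides yields $S(\CA_A)\le S(\CA_B)$. Equivalently, one could take $R=A$ and $J=B\setminus A$ in Theorem~\ref{main theorem} and use $g_A(B\setminus A)\ge 0$; when $A=B$ the claim is trivial.

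\textbf{Submodularity and its deficit.} Assume $A,B,A\cap B,A\cup B\in\Ad$ with $A\cap B\neq\varnothing$. Set $X=A\setminus B$, $Y=A\cap B$, $Z=B\setminus A$; these are pairwise disjoint register sets with $X\cup Y=A$, $Y\cup Z=B$ and $X\cup Y\cup Z=A\cup B$. By definition of conditional mutual information,
\[
I(X:Z\mid Y)_\Omega = S(\Omega_{A})+S(\Omega_{B})-S(\Omega_{A\cap B})-S(\Omega_{A\cup B}).
\]
This quantity is nonnegative by strong subadditivity. All four subsets are admissible, so I substitute \eqref{entropy-bridge} into each term. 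The signed coefficients $+1,+1,-1,-1$ sum to zero, so the four $\log D$ terms cancel, as in \eqref{constant-cancellation}. This gives the exact identity for the deficit, and its nonnegativity gives \eqref{physical-submodularity}.

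\textbf{Possible obstacles.} I do not expect any step to be hard; the only care needed is bookkeeping at degenerate cases. If $X$ or $Z$ is empty, the corresponding register set is trivial and the mutual information is zero; the identity still holds because then $A\cap B$ equals $A$ or $B$. The hypothesis $A\cap B\ne\varnothing$ is needed so that $\CA_{A\cap B}$ is defined (the empty set is not in $\Ad$) and \eqref{entropy-bridge} applies to that term. Without it, $h(\varnothing)=0$ would leave an uncancelled $\log D$.
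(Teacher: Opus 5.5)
Your proposal is correct and matches the paper's proof: monotonicity and submodularity of $h(A)=S(\Omega_A)$, together with the bridge \eqref{entropy-bridge}, give both inequalities once the $\log D$ shifts cancel. The deficit identity follows by expanding $I(A\setminus B:B\setminus A\mid A\cap B)_\Omega$, as you do.
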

\begin{proof}
Apply monotonicity and submodularity to $h$, and use \eqref{entropy-bridge}. For the last assertion, expand the conditional mutual information.
\end{proof}

Thus, whenever the four displayed subsets are admissible,
\begin{equation}\label{physical-diminishing}
    S(\CA_{K\cup\{i\}}) - S(\CA_K) \geq S(\CA_{K\cup\{i,j\}}) - S(\CA_{K\cup\{j\}}),
\end{equation}
for nonempty $K$ and distinct $i,j\notin K$. This is the diminishing-returns law for globally weighted convolution entropy. For instance, four inputs give
\begin{equation}\label{four-overlap}
    S(\CA_{123}) + S(\CA_{234}) \geq S(\CA_{23}) + S(\CA_{1234}),
\end{equation}
where a string of indices denotes the corresponding subset.

For nested binary convolutions, we have the following

\begin{corollary}[Weighted convolutional strong subadditivity]\label{corollary weighted-ssa}
Let $(a,b)$, $(l,m)$, and $(s,t)$ be admissible binary pairs, with all six coefficients nonzero, such that $l b t = m s$. Then, for arbitrary states $\rho,\sigma,\tau$,
\begin{equation}\label{weighted-ssa}
    S\big((\rho\conv_{a,b}\sigma)\conv_{l,m}\tau\big) + S(\sigma) \leq S(\rho\conv_{a,b}\sigma) + S(\sigma\conv_{s,t}\tau).
\end{equation}
\end{corollary}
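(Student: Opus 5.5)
The plan is to realize all four states in \eqref{weighted-ssa} as members of a single compatible family of globally weighted convolutions (Definition~\ref{compatible}) on the index set $\cI=\{1,2,3\}$, with $\rho_1=\rho$, $\rho_2=\sigma$, $\rho_3=\tau$. Once that is done, the inequality is the instance $A=\{1,2\}$, $B=\{2,3\}$ of Corollary~\ref{corollary submodularity}. Here $A\cap B=\{2\}$ is nonempty, and $\CA_{\{2\}}$ has entropy $S(\sigma)$.

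\textbf{Choice of weights.} Normalize $c_{123}=1$ and set
\[
q_1=al,\qquad q_2=bl,\qquad q_3=m .
\]
All three weights are nonzero because all six coefficients are. I then check admissibility of the three relevant sets and identify the states, using the characteristic-function formula \eqref{multi-characteristic} or the grouping rule \eqref{grouping}.
\begin{enumerate}
\item For $\{1,2\}$: $q_1^2+q_2^2=l^2(a^2+b^2)=l^2$, so $c_{12}=l\neq 0$. The normalized coefficients are $q_1/c_{12}=a$ and $q_2/c_{12}=b$, so $\CA_{12}=\rho\conv_{a,b}\sigma$ by \eqref{multi-characteristic} and \eqref{binary-characteristic}.
\item For $\{1,2,3\}$: $q_1^2+q_2^2+q_3^2=l^2+m^2=1$, so $c_{123}=1$. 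Applying \eqref{grouping} with $B=\{1,2\}$ and $C=\{3\}$ (choosing $c_3=q_3=m$) gives $\CA_{123}=\CA_{12}\conv_{l,m}\tau=(\rho\conv_{a,b}\sigma)\conv_{l,m}\tau$.
\item For $\{2,3\}$: this is where the hypothesis $lbt=ms$ enters. Put $c_{23}:=bl/s$, which by hypothesis also equals $m/t$ and is nonzero. Then $q_2=sc_{23}$ and $q_3=tc_{23}$, so
\[
q_2^2+q_3^2=c_{23}^2(s^2+t^2)=c_{23}^2 ,
\]
hence $\{2,3\}\in\Ad$. The normalized coefficients are $(s,t)$, so $\CA_{23}=\sigma\conv_{s,t}\tau$.
\end{enumerate}

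With $\{2\}$, $\{1,2\}$, $\{2,3\}$, $\{1,2,3\}$ all admissible, \eqref{physical-submodularity} gives
\[
S(\CA_{12})+S(\CA_{23})\ge S(\CA_{2})+S(\CA_{123}),
\]
which is exactly \eqref{weighted-ssa}. Equivalently, one can invoke Theorem~\ref{main theorem} with $R=\{2\}$, using submodularity of $g_R$ on $J=\{1\}$, $L=\{3\}$ together with normalization.

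The only delicate point is the bookkeeping of square-root signs. Each $c_A$ is determined only up to sign, and \eqref{grouping} uses specific choices. I fixed the signs explicitly above ($c_{123}=1$, $c_{12}=l$, $c_3=m$, $c_{23}=bl/s$) so that the identifications hold on the nose. Even if some sign were chosen differently, Theorem~\ref{product theorem} says the state changes only by the parity unitary, which leaves every entropy in \eqref{weighted-ssa} unchanged. So the main obstacle is really just verifying the $\{2,3\}$ identification, and that is precisely what the constraint $lbt=ms$ is designed to ensure.
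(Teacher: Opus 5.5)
Your proposal is correct and is essentially the paper's proof. The paper uses the same weights $(la,lb,m)$, the same roots $l$ and $lb/s=m/t$ for the two pairs, and then applies \eqref{physical-submodularity} with intersection $\{2\}$; the only difference is that it identifies the full convolution by matching characteristic functions directly, whereas you go through the grouping rule \eqref{grouping}.
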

\begin{proof}
Assign weights $(la,lb,m)$ to $(\rho,\sigma,\tau)$. Their squared sum is $l^2(a^2 + b^2) + m^2 = 1$. For the first pair take root $l$, so its normalized convolution is $\rho\conv_{a,b}\sigma$. Since $l b t = m s$, the nonzero scalar $r = lb/s = m/t$ satisfies $(lb)^2 + m^2 = r^2(s^2 + t^2) = r^2$. The second pair is therefore admissible with root $r$ and normalized convolution $\sigma\conv_{s,t}\tau$. The full convolution has characteristic function $\Xi_\rho(la x)\Xi_\sigma(lb x)\Xi_\tau(mx)$, which is that of $(\rho\conv_{a,b}\sigma)\conv_{l,m}\tau$. Apply \eqref{physical-submodularity} to the two pairs, whose intersection is the register of $\sigma$.
\end{proof}

Condition $l b t = m s$ is a sufficient compatibility condition for this theorem. We do not assert that it is necessary for every entropy inequality of the form \eqref{weighted-ssa}. For a concrete unequal-coefficient example, in $\Z_{23}$ the pairs $(a,b) = (4,10)$, $(l,m) = (8,11)$, and $(s,t) = (14,14)$ are admissible and satisfy $l b t = m s$. Thus the weighted statement includes parameter choices beyond the balanced specialization.

\begin{corollary}[Convolutional strong subadditivity]\label{corollary balanced-ssa}
Suppose $s,l,m\in\Z_d^\times$ satisfy
\begin{equation}\label{balanced-three-parameters}
    2s^2 = 1, \qquad l^2 + m^2 = 1, \qquad m = ls.
\end{equation}
Then for all states $\rho,\sigma,\tau$,
\begin{equation}\label{balanced-ssa}
    S\big((\rho\conv_{s,s}\tau)\conv_{l,m}\sigma\big) + S(\sigma) \leq S(\rho\conv_{s,s}\sigma) + S(\sigma\conv_{s,s}\tau).
\end{equation}
\end{corollary}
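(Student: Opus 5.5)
The plan is to apply Corollary~\ref{corollary submodularity} directly to a three-input compatible family with suitably chosen global weights, rather than going through Corollary~\ref{corollary weighted-ssa}. In the target inequality $\sigma$ sits in the outer convolution, so the roles of the inputs differ from those in Corollary~\ref{corollary weighted-ssa}. We therefore choose weights from scratch.

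\textbf{Step 1: weights and the full set.} Take $\cI=\{1,2,3\}$ with inputs $(\rho,\sigma,\tau)$ and weights $(q_1,q_2,q_3)=(ls,\,m,\,ls)$. All three are nonzero, since $l,s,m\in\Z_d^\times$. The full squared sum is
\[
l^2\cdot 2s^2+m^2=l^2+m^2=1,
\]
so $\{1,2,3\}\in\Ad$ with root $c=1$. By \eqref{multi-characteristic}, $\CA_{123}$ has characteristic function $\Xi_\rho(lsx)\,\Xi_\sigma(mx)\,\Xi_\tau(lsx)$. Applying \eqref{binary-characteristic} twice shows that $(\rho\conv_{s,s}\tau)\conv_{l,m}\sigma$ has characteristic function $\Xi_\rho(slx)\,\Xi_\tau(slx)\,\Xi_\sigma(mx)$. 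These agree, so uniqueness in Theorem~\ref{product theorem} gives $\CA_{123}=(\rho\conv_{s,s}\tau)\conv_{l,m}\sigma$.

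\textbf{Step 2: the two pairs.} For $\{1,2\}$, using $m=ls$ and $2s^2=1$,
\[
(ls)^2+m^2=2m^2=m^2/s^2=l^2,
\]
so the pair is admissible with root $l$. The normalized coefficients are $ls/l=s$ and $m/l=s$, hence $\CA_{12}=\rho\conv_{s,s}\sigma$. The same computation for $\{2,3\}$ gives root $l$ and $\CA_{23}=\sigma\conv_{s,s}\tau$. The intersection $\{2\}$ is a singleton, hence admissible, with $S(\CA_{\{2\}})=S(\sigma)$. By the remark after Definition~\ref{compatible}, the choice of sign of each root does not affect entropies.

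\textbf{Step 3: conclude.} Apply \eqref{physical-submodularity} with $A=\{1,2\}$ and $B=\{2,3\}$. All four sets $A$, $B$, $A\cap B=\{2\}\neq\varnothing$ and $A\cup B$ are admissible, so
\[
S(\CA_{12})+S(\CA_{23})\ge S(\CA_{2})+S(\CA_{123}).
\]
Substituting the identifications from Steps 1 and 2 yields \eqref{balanced-ssa}.

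The only delicate point is the arithmetic in Step 2: checking that $2m^2$ is the square $l^2$ and that the normalized pair coefficients both collapse to $s$. Both follow from $m=ls$ together with $s^{-2}=2$.
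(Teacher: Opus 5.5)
Your proof is correct and is essentially the paper's argument. The paper specializes Corollary~\ref{corollary weighted-ssa} with $a=b=t=s$, which assigns the equal weights $(ls,ls,m)=(m,m,m)$ and then invokes permutation invariance of the full convolution; you assign the same equal weights $(ls,m,ls)=(m,m,m)$ with $\sigma$ in the middle and apply Corollary~\ref{corollary submodularity} directly, so your identification of $\CA_{123}$ with $(\rho\conv_{s,s}\tau)\conv_{l,m}\sigma$ replaces the permutation step.
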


\begin{proof}
Set $a = b = s = t$ in Corollary~\ref{corollary weighted-ssa}. The compatibility condition becomes $ls^2 = ms$, equivalent to $ls=m$. The characteristic coefficients in the full convolution are then $(ls,ls,m) = (m,m,m)$, so its output is invariant under permuting the three inputs. This gives the nesting displayed in \eqref{balanced-ssa}.
\end{proof}

This proves \cite[Conjecture 2]{BGJRuzsa}. The coefficient assumptions are equivalent to the existence of nonzero square roots of both $2$ and $3$ in $\Z_d$: indeed $3m^2 = 1$ and $l = m/s$. They will not be needed in full for the triangle inequality below.

\subsection{Quantum Ruzsa triangle inequality}

For any fixed admissible convolution $\conv=\conv_{s,t}$, the quantum Ruzsa divergence and its symmetrized version are
\begin{align}
    D_{\Rz}(\rho\Vert\sigma) &:= S(\rho\conv\sigma)-S(\rho),\label{ruzsa-general-definition}\\
    d_{\Rz}(\rho,\sigma) &:= \tfrac{1}{2}\big[S(\rho\conv\sigma) + S(\sigma\conv\rho) - S(\rho) - S(\sigma)\big].\label{ruzsa-symmetric-definition}
\end{align}
These are Definitions~23 and~24 of \cite{BGJRuzsa}. The
convolution need not be balanced for these definitions. In the
balanced case, commutativity reduces the first two entropies in
\eqref{ruzsa-symmetric-definition} to the same term. These symbols always refer to the pair chosen in that context.

Fix $s\in\Z_d^\times$ with $2s^2 = 1$. With $\conv = \conv_{s,s}$, the definition \eqref{ruzsa-general-definition} becomes
\begin{equation}\label{ruzsa-definition}
    D_{\Rz}(\rho\Vert\sigma) = S(\rho\conv_{s,s}\sigma) - S(\rho).
\end{equation}
The quantity is nonnegative by Corollary~\ref{corollary submodularity}. It is a divergence, not a metric: it is generally asymmetric and need not vanish on the diagonal.

\begin{corollary}[Quantum Ruzsa triangle inequality]\label{corollary ruzsa}
Under the binary condition $2s^2 = 1$, arbitrary states satisfy
\begin{equation}\label{ruzsa-triangle}
    D_{\Rz}(\rho\Vert\tau) \leq D_{\Rz}(\rho\Vert\sigma) + D_{\Rz}(\sigma\Vert\tau).
\end{equation}
Equivalently,
\begin{equation}\label{entropy-triangle}
    S(\rho\conv_{s,s}\tau) + S(\sigma) \leq S(\rho\conv_{s,s}\sigma) + S(\sigma\conv_{s,s}\tau).
\end{equation}
No balanced three-input normalization is assumed.
\end{corollary}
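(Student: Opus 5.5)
We will obtain the Ruzsa triangle inequality by specializing Corollary~\ref{corollary submodularity} to three inputs. The point is that no normalization of the full triple is needed, because $\{\rho,\sigma,\tau\}$ only enters as the union $A\cup B$. Since Theorem~\ref{main theorem} guarantees that the extension $g_R$ exists on all subsets, we can work with the auxiliary function $h$ directly.

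Take $\cI=\{1,2,3\}$ with inputs $(\rho_1,\rho_2,\rho_3)=(\rho,\sigma,\tau)$ and equal weights $q_1=q_2=q_3=s$, and let $R=\{2\}$ be the block of $\sigma$. Every pair is admissible, since $q_i^2+q_j^2=2s^2=1=1^2$, with root $c=1$. With this root, \eqref{multi-characteristic} gives the pair convolution the characteristic function $\Xi_{\rho_i}(sx)\Xi_{\rho_j}(sx)$. By \eqref{binary-characteristic} this equals $\Xi_{\rho_i\conv_{s,s}\rho_j}(x)$, so $\CA_{\{i,j\}}=\rho_i\conv_{s,s}\rho_j$. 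In particular:
\begin{itemize}
\item $\CA_{12}=\rho\conv_{s,s}\sigma$,
\item $\CA_{23}=\sigma\conv_{s,s}\tau$,
\item $\CA_{13}=\rho\conv_{s,s}\tau$,
\item $\CA_{\{2\}}=\sigma$.
\end{itemize}
The triple $\{1,2,3\}$ has squared-weight sum $3s^2=3/2$, which may fail to be a nonzero square. It is therefore not assumed admissible.

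The desired inequality \eqref{entropy-triangle} reads $h(13)+h(2)\le h(12)+h(23)$ after adding $2\log D$ to both sides via \eqref{entropy-bridge}. This is legitimate because all four sets involved are admissible and the $\log D$ shifts cancel, as in \eqref{constant-cancellation}. To prove it, we use submodularity and monotonicity of $h$ from the proof of Theorem~\ref{main theorem}. Submodularity applied to $A=\{1,2\}$ and $B=\{2,3\}$ gives
\[
h(12)+h(23)\ge h(2)+h(123).
\]
Monotonicity gives
\[
h(123)\ge h(13).
\]
Combining the two yields the claim. Equivalently, in the language of Theorem~\ref{main theorem} with $R=\{2\}$, we have
\[
g_R(\{1,3\})\le g_R(\{1\})+g_R(\{3\})
\]
by the singleton fractional cover. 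We then compare $h(123)$ with $h(13)$ by monotonicity, which the theorem's extension permits even though $\{1,2,3\}$ is not physical.

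The main point requiring care is exactly this use of the nonphysical marginal $\Omega_{123}$. We will stress that it is a genuine reduced state of the separable auxiliary state \eqref{omega}, so strong subadditivity and separability-based monotonicity apply to it regardless of admissibility. Finally, we rewrite \eqref{entropy-triangle} as \eqref{ruzsa-triangle} using \eqref{ruzsa-definition}:
\[
S(\rho\conv\tau)-S(\rho)\le \big[S(\rho\conv\sigma)-S(\rho)\big]+\big[S(\sigma\conv\tau)-S(\sigma)\big].
\]
This is the same inequality after adding $S(\rho)$ to both sides.
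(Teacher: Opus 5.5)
Your proof is correct and is the paper's own argument: you use the chain $h(12)+h(23)\ge h(2)+h(123)\ge h(2)+h(13)$ for the auxiliary entropy function, with the triple entering only as a marginal of $\Omega$. Your weights $(s,s,s)$ with root $1$ differ from the paper's $(1,1,1)$ with root $s^{-1}$ only cosmetically, since reindexing the sum in \eqref{omega} by $x\mapsto s^{-1}x$ shows both choices give the same $\Omega$.

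One small framing point: Corollary~\ref{corollary submodularity} as stated requires $A\cup B\in\Ad$, so you cannot literally specialize it here. What your argument actually (and correctly) uses is submodularity and monotonicity of $h$ from the proof of Theorem~\ref{main theorem}.
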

\begin{proof}
Give the three inputs weights $(1,1,1)$ and label their auxiliary registers $A,B,C$. Every pair is admissible with root $s^{-1}$. Using the auxiliary entropy function $h$ constructed in the proof of Theorem~\ref{main theorem}, submodularity and monotonicity give
\begin{equation}
    h(AB) + h(BC) \geq h(B) + h(ABC) \geq h(B) + h(AC).
\end{equation}
The entropy bridge \eqref{entropy-bridge} now converts this inequality into \eqref{entropy-triangle}. Notice that $ABC$ is used only as an auxiliary marginal and therefore need not be admissible. Substituting \eqref{ruzsa-definition} yields \eqref{ruzsa-triangle}.
\end{proof}

Thus \eqref{ruzsa-triangle} proves \cite[Conjecture 1]{BGJRuzsa} whenever the balanced binary convolution exists, requiring only a solution of $2s^2=1$. No balanced three-input normalization is needed. This distinction is substantive: over $\Z_7$, the choice $s=2$ satisfies $2s^2 = 1$, whereas the additional conditions $m = ls$ and $l^2 + m^2 = 1$ in Corollary~\ref{corollary balanced-ssa} would imply $l^2 = 3$, which is impossible because $3$ is not a quadratic residue modulo $7$. Hence, over $\Z_7$, \eqref{ruzsa-triangle} holds although the convolutional strong subadditivity (Corollary~\ref{corollary balanced-ssa}) cannot even be instantiated.

The symmetrized divergence of \eqref{ruzsa-symmetric-definition} has the balanced form
\begin{equation}\label{symmetric-ruzsa}
    d_{\Rz}(\rho,\sigma) = S(\rho\conv_{s,s}\sigma) - \tfrac{1}{2} S(\rho) - \tfrac{2}{2} S(\sigma).
\end{equation}
It satisfies the same triangle inequality, since $d_{\Rz}(\rho,\sigma) + d_{\Rz}(\sigma,\tau) - d_{\Rz}(\rho,\tau)$ is the deficit in \eqref{entropy-triangle}. It is also nonnegative, but its diagonal value is generally positive. Convolution-based symmetric divergences and triangle questions
are part of the related work in \cite{Hou}.

\begin{remark}
For diagonal states with corresponding independent random variables $X,Y,Z$, the balanced convolution has the law of $s(X + Y)$. Since multiplication by $s \neq 0$ is a bijection of $\Z_d^n$, it preserves Shannon entropy, and \eqref{entropy-triangle} reduces to
\begin{equation}
    H(X+Z) + H(Y) \leq H(X + Y) + H(Y + Z).
\end{equation}
This is a triangle inequality for the divergence used in \eqref{ruzsa-definition}. It should be distinguished from Tao's
entropic Ruzsa distance \cite{Tao,GMT}
\begin{equation}
    d_{\Rz}(X,Y) = H(X - Y) - \frac{1}{2} H(X) - \frac{1}{2} H(Y),
\end{equation}
whose triangle inequality is equivalent to
\begin{equation}
    H(X - Z) + H(Y) \leq H(X - Y) + H(Y - Z).
\end{equation}
The distinction lies in the sign pattern. Our globally weighted construction assigns one fixed coefficient to each input across all subsets, whereas the three difference terms above cannot be produced simultaneously from a single assignment of signs to $X,Y,Z$. Thus negative global weights are permitted, but the main theorem does not automatically recover Tao's difference-based triangle inequality.
\end{remark}

\subsection{Quantum Pl\"unnecke--Ruzsa inequalities}

Theorem~\ref{main theorem} is a quantum version of the direct fractional-cover entropy bounds for sums in \cite{MMT}. As in the classical theory, small entropy growth against one fixed input controls growth against many inputs. The bounds hold directly for the original input states; no analogue of passing to a favorable subset of the base object is required.

For the singleton cover, it takes the particularly transparent form
\begin{equation}\label{singleton-growth}
    S(\CA_{R\cup M}) - S(\CA_R) \leq \sum_{j\in M} S(\CA_{R\cup\{j\}}) - S(\CA_R).
\end{equation}
In terms of entropy growth factors $\kappa_R(J) = \exp(S(\CA_{R\cup J}) - S(\CA_R))$, the full fractional-cover version is
\begin{equation}\label{growth-factors}
    \kappa_R(M) \leq \prod_{J\in\cF} \kappa_R(J)^{\alpha_J}.
\end{equation}
These factors play the role of multiplicative sumset growth. For example, if $|M|=r$ and each singleton entropy gain is at most $\log K$, then \eqref{singleton-growth} gives $\kappa_R(M) \leq K^r$, subject only to the displayed endpoint normalizations.

With base register $0$ and added registers $1,2,3$, the singleton cover and the cover by pairs of weight $1/2$ respectively give
\begin{align}
    S(\CA_{0123}) + 2S(\rho_0) &\leq S(\CA_{01}) + S(\CA_{02}) + S(\CA_{03}),\label{four-star}\\
    2S(\CA_{0123}) + S(\rho_0) &\leq S(\CA_{012}) + S(\CA_{013}) + S(\CA_{023}).\label{four-fractional}
\end{align}
Each inequality requires exactly the subsets displayed in it to be admissible. In particular, \eqref{four-star} requires no triple-subset normalization.

\begin{corollary}[Balanced four-input growth]\label{corollary four-input}
If $2s^2=1$, then
\begin{equation}\label{four-balanced}
    S\big((\rho_0\conv_{s,s}\rho_1) \conv_{s,s} (\rho_2\conv_{s,s}\rho_3)\big) + 2S(\rho_0) \leq \sum_{j=1}^3 S(\rho_0\conv_{s,s}\rho_j).
\end{equation}
\end{corollary}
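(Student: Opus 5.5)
The plan is to obtain \eqref{four-balanced} as a direct specialization of the singleton-cover inequality \eqref{four-star}, i.e.\ of Theorem~\ref{main theorem} with base block $R=\{0\}$, $M=\{1,2,3\}$ and the singleton cover. I will assign the global weights $q_0=q_1=q_2=q_3=1$ to the inputs $\rho_0,\ldots,\rho_3$. Then I check that exactly the subsets appearing in \eqref{four-star} are admissible, and identify each $\CA_A$ with the corresponding binary or nested convolution.

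\textbf{Admissibility of the pairs.} For each pair $\{0,j\}$ the squared-weight sum is $2$. Since $2s^2=1$, we have $2=(s^{-1})^2$, so the pair is admissible with root $c_{0j}=s^{-1}\in\Z_d^\times$. The normalized coefficients are $q_i/c_{0j}=s$. By \eqref{multi-characteristic} and \eqref{binary-characteristic}, $\Xi_{\CA_{0j}}(x)=\Xi_{\rho_0}(sx)\Xi_{\rho_j}(sx)$, so $\CA_{0j}=\rho_0\conv_{s,s}\rho_j$. Singletons are always admissible, with $S(\CA_{\{0\}})=S(\rho_0)$.

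\textbf{Admissibility and identification of the full set.} For $\{0,1,2,3\}$ the squared sum is $4=2^2$, and $2\neq0$ because $d$ is odd. Hence the full set is admissible with root $c=2$, and $\Xi_{\CA_{0123}}(x)=\prod_{i=0}^3\Xi_{\rho_i}(x/2)$. For the nested state I apply \eqref{binary-characteristic} twice:
\[
\Xi_{(\rho_0\conv_{s,s}\rho_1)\conv_{s,s}(\rho_2\conv_{s,s}\rho_3)}(x)=\prod_{i=0}^3\Xi_{\rho_i}(s^2x)=\prod_{i=0}^3\Xi_{\rho_i}(x/2),
\]
using $s^2=1/2$. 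Since the characteristic function determines the state, the nested convolution equals $\CA_{0123}$. Alternatively, this follows from the grouping identity \eqref{grouping} with $B=\{0,1\}$, $C=\{2,3\}$, where $c_B/c_{B\cup C}=s^{-1}/2=s$.

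\textbf{Conclusion.} Substituting these identifications into \eqref{four-star} gives \eqref{four-balanced}. Only the subsets displayed there need to be admissible, and no triple-subset normalization is required. The only point that requires any care is the characteristic-function bookkeeping that matches the nested bracketing with the global weight normalization, which reduces to the identity $s^2=1/2$.
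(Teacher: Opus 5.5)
Your proposal is correct and follows the paper's own proof: take all four global weights equal to one, note that each pair $\{0,j\}$ is admissible with root $s^{-1}$ and the full set with root $2$, identify the nested convolution with $\CA_{0123}$ via $s^2=1/2$, and apply the singleton-cover inequality \eqref{four-star}. Your explicit characteristic-function check and the alternative via \eqref{grouping} spell out the same bookkeeping that the paper's proof states in a single line.
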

\begin{proof}
Take all four global weights equal to one. Every pair is admissible, and the full set is admissible with root $2$. Its normalized characteristic coefficients are all $1/2 = s^2$, which are exactly the effective coefficients in the four-input convolution. Apply \eqref{four-star}.
\end{proof}

\begin{remark}
The preceding inequality applies over $\Z_7$, where one may take $s=2$. An attempt to derive it by iterating three-input convolution inequalities would require intermediate equal-weight convolutions of three inputs. Such a convolution would have a common coefficient $t$ satisfying $3t^2 = 1$, which has no solution in $\Z_7$. Our fractional-cover argument avoids this obstruction because no intermediate triple convolution is required.
\end{remark}

\subsection{Repeated inputs and optimal coefficients}\label{repeated inputs and optimal coefficients}

The fractional partitions by equally sized subsets yield a hierarchy between different repetition scales. It is useful first to allow a distinguished base state.

\begin{corollary}[Entropy gain per added copy]\label{corollary copies}
Let $a,b\in\Z_d^\times$, and fix states $\rho,\sigma$. For an integer $j \geq 1$ with $a^2 + j b^2 = c_j^2 \neq 0$, $c_j\in\Z_d^\times$, let $T_j$ be
the state with characteristic function
\begin{equation}\label{repeated-characteristic}
    \Xi_{T_j}(x) = \Xi_\rho\left(\frac{a}{c_j}x\right)\Xi_\sigma\left(\frac{b}{c_j}x\right)^j.
\end{equation}
Put $T_0 = \rho$. If $1 \leq \ell \leq k$ and both $T_\ell,T_k$ are defined, then
\begin{equation}\label{copy-comparison}
    \frac{S(T_k) - S(\rho)}{k} \leq \frac{S(T_\ell) - S(\rho)}{\ell}.
\end{equation}
No normalization at intermediate repetition counts is required.
\end{corollary}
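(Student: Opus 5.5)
Apply Theorem~\ref{main theorem} with index set $\cI=\{0,1,\ldots,k\}$, input $\rho_0=\rho$ with weight $q_0=a$, and $k$ copies $\rho_i=\sigma$ with weight $q_i=b$ for $1\le i\le k$. Take the base block $R=\{0\}$, which is admissible with root $a$, and $\CA_R=\rho$. For $J\subseteq\{1,\ldots,k\}$ with $|J|=j$ we have $\sum_{i\in R\cup J}q_i^2=a^2+jb^2$. Whenever this equals $c_j^2$ with $c_j\neq0$, the characteristic function \eqref{multi-characteristic} of $\CA_{R\cup J}$ is $\Xi_\rho(ax/c_j)\Xi_\sigma(bx/c_j)^j$. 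By uniqueness in Theorem~\ref{product theorem}, $\CA_{R\cup J}=T_j$, up to the sign choice of $c_j$, which does not affect entropy. Hence, whenever $T_\ell$ and $T_k$ are defined, $R\cup J\in\Ad$ for every $|J|=\ell$, $R\cup M\in\Ad$ for $M=\{1,\ldots,k\}$, and
\[
S(\CA_{R\cup J})-S(\CA_R)=S(T_\ell)-S(\rho).
\]

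Next take the fractional cover of $M$ by all $\ell$-element subsets, each with weight $\binom{k-1}{\ell-1}^{-1}$; this is the fractional partition noted before the theorem. Only the subsets $R\cup J$ with $|J|=\ell$ and $R\cup M$ must be admissible, so no intermediate repetition count is needed, exactly as the theorem's last sentence allows. Inequality \eqref{fractional-growth} then gives
\[
S(T_k)-S(\rho)\le \binom{k}{\ell}\binom{k-1}{\ell-1}^{-1}\bigl(S(T_\ell)-S(\rho)\bigr)=\frac{k}{\ell}\bigl(S(T_\ell)-S(\rho)\bigr).
\]
Dividing by $k$ yields \eqref{copy-comparison}. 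The case $\ell=k$ is trivial.

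The only real checks are two. First, the identification of $\CA_{R\cup J}$ with $T_\ell$ must not depend on which $\ell$ copies are chosen; this holds because all the $\sigma$'s carry the same weight $b$ and the characteristic function depends only on $|J|$. Second, the binomial ratio must be computed correctly. Neither is a genuine obstacle; the substance is entirely in Theorem~\ref{main theorem}.
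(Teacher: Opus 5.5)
Your proposal is correct and follows the paper's proof essentially verbatim. Like the paper, you take one base input $\rho$ of weight $a$ and $k$ copies of $\sigma$ of weight $b$, cover the copy indices by all $\ell$-subsets with weight $\binom{k-1}{\ell-1}^{-1}$, and invoke Theorem~\ref{main theorem} to obtain the factor $k/\ell$. Your explicit checks, that $\CA_{R\cup J}=T_{|J|}$ and that only the displayed subsets need to be admissible, simply spell out what the paper leaves implicit.
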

\begin{proof}
Take one base input $\rho$ of weight $a$ and $k$ independent copies of $\sigma$, each of weight $b$. Cover the copy indices by all their $\ell$-element subsets, each of weight $\binom{k - 1}{\ell - 1}^{-1}$. All those subsets together with the base have the same convolution entropy $S(T_\ell)$. Thus
Theorem~\ref{main theorem} gives
\begin{equation}
    S(T_k) - S(\rho) \leq \frac{\binom{k}{\ell}}{\binom{k - 1}{\ell - 1}}[S(T_\ell) - S(\rho)] = \frac{k}{\ell}[S(T_\ell) - S(\rho)],
\end{equation}
as desired.
\end{proof}

For a single input state, write $\mathcal{C}_{[m]}(\rho)$ for its balanced $m$-input convolution whenever $m$ is a nonzero square in $\Z_d$. Thus, for either choice of $t_m$ with $m t_m^2 = 1$,
\begin{equation}\label{balanced-m-characteristic}
    \Xi_{\mathcal{C}_{[m]}(\rho)}(x) = \Xi_\rho(t_m x)^m, \qquad \mathcal{C}_{[1]}(\rho) = \rho
\end{equation}
with $t_1 = 1$. Entropy is independent of the root chosen.

The quantum doubling constant \cite[Definition 42]{BGJRuzsa} is defined as
\begin{equation}\label{doubling-general-definition}
    \delta_q[\rho] := \exp\big(S(\rho\conv\rho) - S(\rho)\big) = \exp\big(D_{\Rz}(\rho\Vert\rho)\big).
\end{equation}

\begin{corollary}[Balanced repetition hierarchy]\label{corollary balanced-growth}
For $2 \leq \ell \leq m$ such that $\ell$ and $m$ are nonzero squares in $\Z_d$,
\begin{equation}\label{balanced-growth}
    \frac{S(\mathcal{C}_{[m]}(\rho)) - S(\rho)}{m-1} \leq \frac{S(\mathcal{C}_{[\ell]}(\rho)) - S(\rho)}{\ell-1}.
\end{equation}
If balanced binary convolution exists, then every admissible $m\ge2$ satisfies
\begin{equation}\label{doubling-growth}
    S(\mathcal{C}_{[m]}(\rho)) - S(\rho) \leq (m-1) \log\delta_q[\rho].
\end{equation}
\end{corollary}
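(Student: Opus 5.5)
The plan is to reduce \eqref{balanced-growth} to Corollary~\ref{corollary copies}, using one copy of $\rho$ as the base. Take $a=b=1$ and $\sigma=\rho$. Then $c_j^2 = 1+j$, so the state $T_j$ is defined exactly when $j+1$ is a nonzero square in $\Z_d$. In that case \eqref{repeated-characteristic} reads
\begin{equation*}
\Xi_{T_j}(x) = \Xi_\rho(x/c_j)^{j+1}.
\end{equation*}
Comparing with \eqref{balanced-m-characteristic} for $m=j+1$ and $t_m = 1/c_j$ (which satisfies $m t_m^2=1$), we get $T_{m-1} = \mathcal{C}_{[m]}(\rho)$, up to the choice of root, which does not affect entropy. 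We also have $T_0=\rho$.

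Now let $2\le \ell\le m$ with $\ell,m$ nonzero squares. Set $\ell' = \ell-1$ and $k=m-1$. Then $1\le \ell'\le k$, and both $T_{\ell'}$ and $T_k$ are defined. Corollary~\ref{corollary copies} gives
\begin{equation*}
\frac{S(T_k)-S(\rho)}{k}\le \frac{S(T_{\ell'})-S(\rho)}{\ell'},
\end{equation*}
which is exactly \eqref{balanced-growth}. Behind this step are the global weights all equal to $1$ on $m$ copies of $\rho$, and the fractional partition of the $m-1$ non-base copies by their $(\ell-1)$-subsets. Because the corollary needs no normalization at intermediate counts, square conditions at non-square intermediate scales cause no trouble.

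For \eqref{doubling-growth}, note that balanced binary convolution exists exactly when $2s^2=1$ is solvable, i.e.\ when $2$ is a nonzero square. Then $\ell=2$ is admissible. Also $\mathcal{C}_{[2]}(\rho)$ has characteristic function $\Xi_\rho(sx)^2$, and by \eqref{binary-characteristic} this equals that of $\rho\conv_{s,s}\rho$. Hence
\begin{equation*}
S(\mathcal{C}_{[2]}(\rho))-S(\rho) = \log\delta_q[\rho],
\end{equation*}
taking the balanced convolution in \eqref{doubling-general-definition}. Applying \eqref{balanced-growth} with $\ell=2$ gives \eqref{doubling-growth} for every admissible $m\ge 2$; the case $m=2$ holds with equality.

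The only delicate point is the bookkeeping. The index shift $j=m-1$ must be done correctly. The identification $T_{m-1}=\mathcal{C}_{[m]}(\rho)$ must allow for sign choices of roots, which only compose with parity and so leave entropy unchanged. Finally, the convolution $\conv$ in the definition of $\delta_q$ must be read as the balanced one $\conv_{s,s}$.
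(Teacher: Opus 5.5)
Your proposal is correct and is essentially the paper's proof: it also applies Corollary~\ref{corollary copies} with $a=b=1$, $\sigma=\rho$ and copy counts $\ell-1$ and $m-1$, and then specializes to $\ell=2$ via $\mathcal{C}_{[2]}(\rho)=\rho\conv_{s,s}\rho$. Your extra bookkeeping makes explicit what the paper's one-line proof leaves implicit, namely the identification $T_{m-1}=\mathcal{C}_{[m]}(\rho)$ up to parity and the reading of $\delta_q$ with the balanced pair.
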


\begin{proof}
Apply Corollary~\ref{corollary copies} with $a = b = 1$, $\sigma = \rho$, and copy counts $\ell - 1$ and $m - 1$. Then set $\ell = 2$.
\end{proof}

The integers $\ell,m$ in entropy ratios and exponents are ordinary real numbers; only the normalization equations use their residues in $\Z_d$. If $2s^2 = 1$, every $m = 2^r$ is admissible, with $t_m = s^r$. A complete balanced binary tree of depth $r$ has that characteristic function. For other grouping schemes, coefficients must be tracked using \eqref{grouping}; the fixed-pair iterate $\conv_{s,t}^{m - 1}\rho$ in \eqref{bgj-iteration} generally has different global weights and is not being identified with $\mathcal{C}_{[m]}(\rho)$. In particular, \eqref{doubling-growth} is a bound for the normalized convolutions \eqref{balanced-m-characteristic}.

\begin{proposition}[Sharpness]\label{proposition sharpness}
For every fixed odd prime $d$ and fixed admissible integers $2 \leq \ell \leq m$, the factor $(m - 1)/(\ell - 1)$ in \eqref{balanced-growth} cannot be replaced by a smaller universal factor, even for one-qudit diagonal states. More generally, for fixed $a,b\ne0$ and admissible counts $1 \leq \ell \leq k$, the factor $k/\ell$ in the equivalent unnormalized form of \eqref{copy-comparison} is optimal.
\end{proposition}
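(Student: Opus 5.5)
Sharpness will be shown with near-extremal one-qudit diagonal examples. By the diagonal specialization \eqref{diagonal-specialization}, for $\rho=\sum_z p(z)\proj{z}$ we have $S(T_j)=H(aX_0+bY_1+\cdots+bY_j)$, where $X_0\sim$ the law of $\rho$ and $Y_i$ are i.i.d.\ with the law of $\sigma$. Similarly, $S(\mathcal C_{[m]}(\rho))=H(X_1+\cdots+X_m)$ with $X_i$ i.i.d. So it suffices to find classical distributions on $\Z_d$ for which the ratio of gains approaches $k/\ell$, respectively $(m-1)/(\ell-1)$. The balanced statement is the special case $a=b=1$, $\sigma=\rho$, with counts $\ell-1,m-1$. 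I will therefore treat the general form and then specialize.

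The natural regime is small perturbation of a point mass. Take $\rho$ to be a point mass (so $S(\rho)=0$; in the balanced case see below). Take $\sigma$ with law $(1-\varepsilon)\delta_0+\varepsilon\delta_1$. Then $aX_0+b\sum_{i\le j}Y_i$ is a translate of $b\,\mathrm{Bin}(j,\varepsilon)$ reduced mod $d$. Because $b\neq0$, multiplication by $b$ is injective, so its entropy is that of the binomial mod $d$. As $\varepsilon\to0$, the dominant contribution is $-(1-\varepsilon)^j\log(1-\varepsilon)^j-j\varepsilon\log(j\varepsilon)+O(\varepsilon)=j\varepsilon\log(1/\varepsilon)+O(\varepsilon)$. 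Outcomes with two or more successes have probability $O(\varepsilon^2)$ and contribute $O(\varepsilon^2\log(1/\varepsilon))$, and reduction mod $d$ does not change this leading order. Hence $S(T_k)/S(T_\ell)\to k/\ell$ as $\varepsilon\to0$. Consequently no universal factor $c<k/\ell$ can satisfy $S(T_k)-S(\rho)\le c\,(S(T_\ell)-S(\rho))$.

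In the balanced case $\rho=\sigma$ has this Bernoulli law. Then $S(\rho)=h(\varepsilon)\sim\varepsilon\log(1/\varepsilon)$ and $S(\mathcal C_{[m]}(\rho))\sim m\varepsilon\log(1/\varepsilon)$. The gains are therefore $\sim(m-1)\varepsilon\log(1/\varepsilon)$ and $(\ell-1)\varepsilon\log(1/\varepsilon)$, and their ratio tends to $(m-1)/(\ell-1)$. Admissibility of $\ell,m$ (and of $\ell,k$ in $\Z_d$) only guarantees that the states are defined; the computation takes place in the integer counts.

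The main obstacle is making the asymptotic expansion rigorous uniformly, including wraparound mod $d$ when $j\ge d$. I will handle this by bounding $H(\mathrm{Bin}(j,\varepsilon)\bmod d)$ above by $H(\mathrm{Bin}(j,\varepsilon))$. For the lower bound I will use the fact that the masses at $0$ and $1$ (mod $d$) are $1-j\varepsilon+O(\varepsilon^2)$ and $j\varepsilon+O(\varepsilon^2)$, since aliasing adds only $O(\varepsilon^d)$. Both bounds give $j\varepsilon\log(1/\varepsilon)+O(\varepsilon)$, which finishes the proof.
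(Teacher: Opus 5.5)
Your proposal is correct and is essentially the paper's argument: the same one-qudit diagonal Bernoulli state $(1-\varepsilon)\proj{0}+\varepsilon\proj{1}$, and the same expansion $S=j\varepsilon\log(1/\varepsilon)+O(\varepsilon)$ coming from the zero-success and one-success masses. The only variation is in the general $(a,b)$ case. There the paper takes $\rho=\sigma=\rho_\varepsilon$, so the first-order mass is $(j+1)\varepsilon$, possibly split over two nonzero residues; your point-mass base gives $S(\rho)=0$ and removes any dependence on $a$, and your upper and lower entropy bounds make the error terms slightly more explicit than the paper's.
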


\begin{proof}
Let
\begin{equation}
        \rho_\varepsilon = (1 - \varepsilon)\proj{0} + \varepsilon\proj{1}, \qquad 0 < \varepsilon < 1.
\end{equation}
For admissible $j$, the diagonal distribution of $\mathcal{C}_{[j]}(\rho_\varepsilon)$ is the law of $t_j(X_1+\cdots+X_j)$ in $\Z_d$, where the $X_i$
are independent Bernoulli variables of parameter $\varepsilon$. Multiplication by $t_j\ne0$ preserves entropy. The event of no success has probability $1 - j\varepsilon + O_j(\varepsilon^2)$, the event of exactly one success has probability $j\varepsilon + O_j(\varepsilon^2)$, and all other events together have probability $O_j(\varepsilon^2)$. Reduction modulo $d$ does not identify the residues $0$ and $1$; contributions from two or more successes affect only the error terms. It follows that
\begin{equation}
        S(\mathcal{C}_{[j]}(\rho_\varepsilon)) = j\varepsilon\log(1/\varepsilon) + O_{d,j}(\varepsilon), \qquad S(\rho_\varepsilon) = \varepsilon\log(1/\varepsilon) + O(\varepsilon).
\end{equation}
Consequently,
\begin{equation}
        \lim_{\varepsilon \downarrow 0}\frac{S(\mathcal{C}_{[m]}(\rho_\varepsilon)) - S(\rho_\varepsilon)}{S(\mathcal{C}_{[\ell]}(\rho_\varepsilon)) - S(\rho_\varepsilon)} = \frac{m - 1}{\ell - 1}.
\end{equation}
For $\ell > 1$ the denominator is positive for sufficiently small $\varepsilon$, which proves the assertion.

For the more general statement, take $\rho = \sigma = \rho_\varepsilon$ in \eqref{repeated-characteristic}. The diagonal law of $T_j$ is that of $c_j^{-1}(aX_0 + bX_1 +\cdots+ bX_j)$. Each event of exactly one success has a nonzero residue because $a,b \neq 0$. Whether these residues coincide or are distinct, their total first-order mass is $(j + 1)\varepsilon$. Consequently $S(T_j) = (j + 1)\varepsilon\log(1/\varepsilon) + O_{d,j}(\varepsilon)$, and the ratio of the gains at counts $k$ and $\ell$ tends to $k/\ell$.
\end{proof}

%%%%%%%%%%%%%%%%%%%%%%%%%%%%%%%%%%%%%%%%%%
%%%%%%%%%%%%%%%%%%%%%%%%%%%%%%%%%%%%%%%%%%
%%%%%%%%%%%%%%%%%%%%%%%%%%%%%%%%%%%%%%%%%%

\section{Conclusion}\label{conclusion section}

The characteristic-kernel representation places all convolutions in a common entropy space. Three features make this possible: the inputs occupy independent registers, the convolution weights are inherited consistently across subsets, and every admissible subcollection admits the required coefficient normalization. These ingredients produce a separable auxiliary state whose marginal entropies reproduce the physical convolution entropies up to explicit constants. Ordinary quantum strong subadditivity then yields, relative to every fixed nonempty admissible input block, a normalized, monotone, submodular extension of the physical entropy gains. This single mechanism accounts for convolutional strong subadditivity, the quantum Ruzsa triangle inequality, and the direct Pl\"unnecke--Ruzsa hierarchy.

The resulting theory lies on the direct-growth side of entropic additive combinatorics. On the classical inverse side, the recent proofs of Marton's polynomial Freiman--Ruzsa conjecture in characteristic two and, more generally, in abelian groups of bounded torsion show that small doubling forces polynomially controlled algebraic structure \cite{GGMT,GGMTTorsion}. Building on these results, a polynomial-time algorithmic Freiman--Ruzsa theorem for subsets of $\mathbb F_2^n$ was recently obtained, with applications to stabilizer-state tomography and the learning of quantum states with bounded stabilizer extent \cite{ADGG}.

These inverse developments exploit additional additive structure that is not captured by the characteristic-kernel method. Understanding whether analogous structure can be extracted from small quantum convolutional entropy growth, and how it relates to stabilizer structure and magic, is a natural direction for further work.

\paragraph{Acknowledgements.} The research is supported by the National Research Foundation (NRF-NRFI10-2024-0006), Singapore, through the National Quantum Office, hosted by A*STAR, under its Centre for Quantum Technologies Funding Initiative (S24Q2d0009).

\end{document}